\documentclass[sigconf, nonacm]{acmart}
\usepackage{url,caption}
\usepackage{tikz}
\usepackage{filecontents}
\usepackage{graphicx}
\usepackage{textcomp}
\usepackage{pifont,framed}
\usepackage[noend]{algpseudocode}
\usepackage{algorithm,algorithmicx}
\usepackage{xcolor}
\usepackage{color}
\usepackage{verbatim}
\usepackage[subfigure]{tocloft}
\usepackage{subfigure}
\usepackage[algo2e,ruled,vlined,lined]{algorithm2e}
\usepackage{multirow}
\usepackage{mathrsfs}
\usepackage{bm} 
\usepackage{algpseudocode}

\usepackage[font=small,skip=0pt]{caption}
\allowdisplaybreaks[4]
\def\BibTeX{{\rm B\kern-.05em{\sc i\kern-.025em b}\kern-.08em
		T\kern-.1667em\lower.7ex\hbox{E}\kern-.125emX}}

\newtheorem{definition}{\bf Definition}[section]
\newtheorem{theorem}{\bf Theorem}[section]

\usepackage{tikz}
\usepackage{eso-pic}

	\AddToShipoutPictureBG*{
		\begin{tikzpicture}[remember picture,overlay]
			\node[anchor=north,yshift=-0.3cm] at (current page.north) {%
				\textcolor{red}{\normalsize\textbf{Accepted by VLDB 2025}}%
			};
		\end{tikzpicture}%
	}

\begin{document}

\title{Privacy for Free: Leveraging Local Differential Privacy Perturbed Data from Multiple Services}

\author{Rong Du}
\email{roong.du@connect.polyu.hk}
\affiliation{%
	\institution{The Hong Kong Polytechnic University}
}

\author{Qingqing Ye}
\authornote{Corresponding author}
\email{qqing.ye@polyu.hk}
\affiliation{%
	\institution{The Hong Kong Polytechnic University}
}

\author{Yue Fu}
\email{yuesandy.fu@connect.polyu.hk}
\affiliation{%
	\institution{The Hong Kong Polytechnic University}
}

\author{Haibo Hu}
\email{haibo.hu@polyu.hk}
\affiliation{%
	\institution{The Hong Kong Polytechnic University}
}

\begin{abstract}Local Differential Privacy (LDP) has emerged as a widely adopted privacy-preserving technique in modern data analytics, enabling users to share statistical insights while maintaining robust privacy guarantees. However, current LDP applications assume a single service gathering perturbed information from users. In reality, multiple services may be interested in collecting users' data, which poses privacy burdens to users as more such services emerge. To address this issue, this paper proposes a framework for collecting and aggregating data based on perturbed information from multiple services, regardless of their estimated statistics (e.g., mean or distribution) and perturbation mechanisms.
	
Then for mean estimation, we introduce the Unbiased Averaging (UA) method and its optimized version, User-level Weighted Averaging (UWA). The former utilizes biased perturbed data, while the latter assigns weights to different perturbed results based on perturbation information, thereby achieving minimal variance. For distribution estimation, we propose the User-level Likelihood Estimation (ULE), which treats all perturbed results from a user as a whole for maximum likelihood estimation. Experimental results demonstrate that our framework and constituting methods significantly improve the accuracy of both mean and distribution estimation. 
\end{abstract}

\maketitle

\section{Introduction}
With the proliferation of the Internet, smart devices, and artificial intelligence, data has become one of our most valuable resources. Businesses across various sectors, especially healthcare, retailing, and finance, optimize their operations and decision-making by leveraging massive amounts of data. For instance, E-Commerce companies such as Amazon and Netflix rely on big data to provide personalized recommendations, enhancing customer experience and sales~\cite{hewage2018big,maddodi2019netflix}. However, privacy concerns grow as data scales exponentially and AI analysis techniques mature ~\cite{daswani2021equifax}.


To address these concerns, local differential privacy (LDP) \cite{chen2016private, duchi2013local, kasiviswanathan2011can}, a local variant of differential privacy model~\cite{dwork2008differential, dwork2006calibrating, mcsherry2007mechanism}, has widespread adoption in real-world applications. It allows privacy protection at the data source on the user's end to minimize the risk of privacy leakage, and thus helps enterprises to comply with increasingly stringent data protection regulations. As a result, leading tech companies embrace LDP in their mainstream products. For example, Google employs LDP to collect user data on Chrome home pages \cite{erlingsson2014rappor}, while Apple utilizes it on iOS devices to analyze app engagement patterns privately \cite{team2017learning}. Similarly, Samsung incorporates LDP into its Android phones to securely gather telemetry data \cite{nguyen2016collecting}, and Huawei implements differential privacy techniques in various products and services to protect users' personal information \cite{keyofhuawei}.

\begin{figure}
	\centering
	\includegraphics[width=0.45\textwidth]{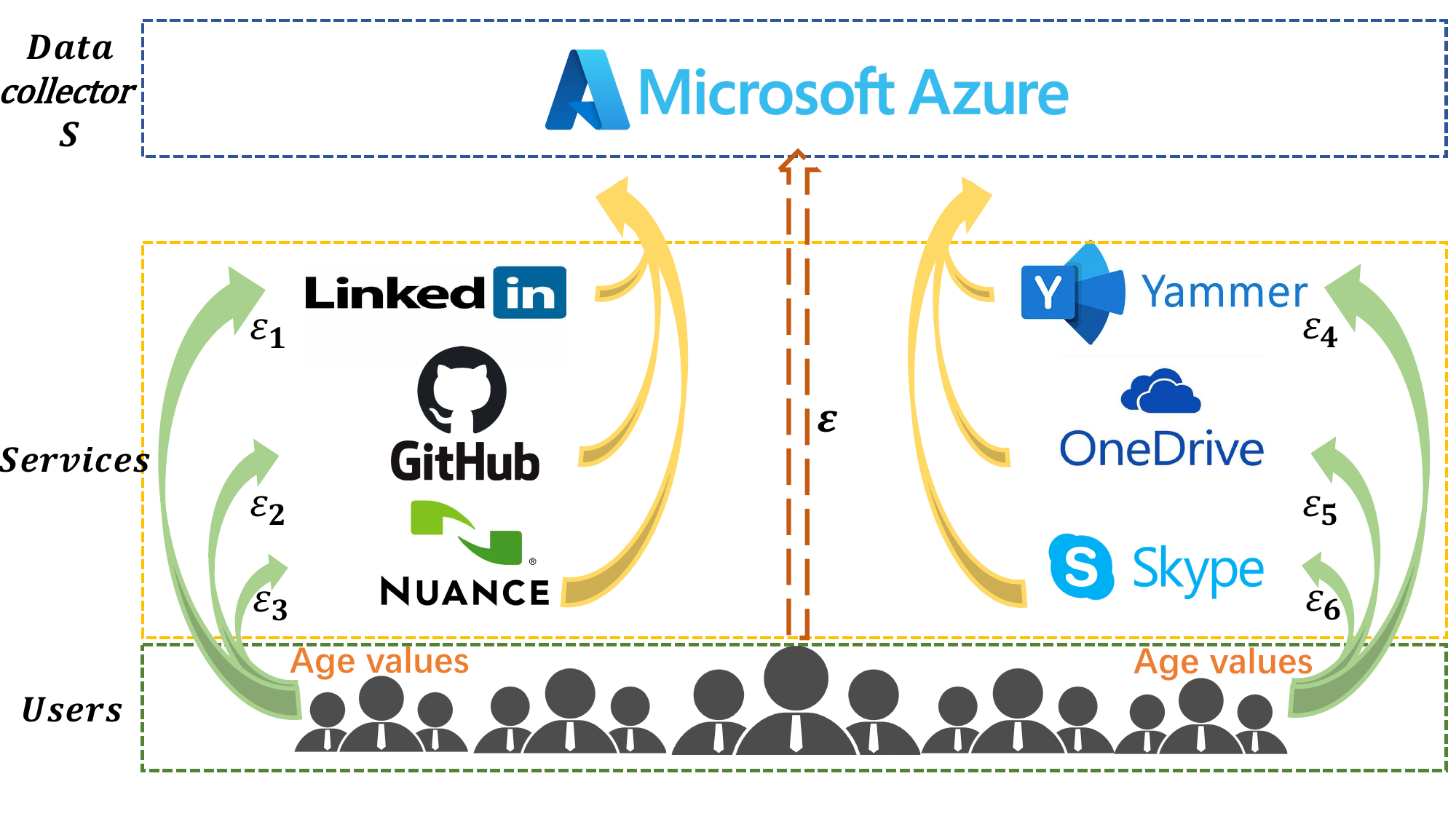}
	\caption{An example for multiple services under LDP}
	\label{fig:example1}
	\vspace{-0.2in}
\end{figure}


Many existing LDP applications assume a standalone service where each user's information is collected only once. However, in the era of big data, data are often collected and processed by multiple services early on. This is common in government statistical systems, which aggregate data from multiple agencies or organizations in the field of health, education, justice, and social welfare, such as the U.S. Federal Data Strategy \footnote{https://strategy.data.gov/action-plan/} and the European Statistical System \footnote{https://ec.europa.eu/eurostat/web/main/publications/statistical-reports}. In the healthcare domain, medical institutions independently collect patient data, as demonstrated by PCORnet ~\cite{toh2017national}, which involves multiple healthcare organizations collecting similar patient data, and the FDA's Sentinel Initiative~\cite{platt2012us}, which gathers data from multiple healthcare providers to monitor drug safety. Public health authorities like CDC can leverage existing data from these institutions for population-level analyses by using perturbed data\footnote{https://www.cdc.gov/mmwr/pdf/other/su6103.pdf}, reducing privacy risks and patient burden.

The multi-service data collection scenario is illustrated in Figure \ref{fig:example1}, where various services in the yellow box independently collect users' age information. In this scenario, users need to submit their age information multiple times, spending a privacy budget for each submission. Suppose a new service (e.g., Microsoft Azure), referred to as data collector $S$, aims to acquire users' age information. If $S$ directly requests users to upload their data, according to the composition theorem \cite{li2016differential}, the users' privacy would be further compromised by increasing the overall privacy budget by an additional $\epsilon$. As an alternative, $S$ can leverage the existing perturbed data from other services and thus keep the privacy budget of all users unaltered. Moreover, since the perturbed data gathered by multiple services is abundant, data collector $S$ can aggregate more accurate statistical results than from any single service.

In this paper, we investigate such a scenario where multiple services gather identical numerical data from users with the same or different mechanisms and privacy budgets. Unlike prior studies such as \cite{yiwen2018utility, du2023differential}, which only consider mean aggregation from a single mechanism using different privacy budgets, our study explores both mean and generic distribution estimation under various mechanisms and privacy budgets. For collecting mean value, various perturbation mechanisms exist (e.g., Laplace ~\cite{dwork2006calibrating}, Stochastic Rounding~\cite{duchi2018minimax}, Piecewise Mechanism~\cite{wang2019collecting}). As for distribution estimation, state-of-the-art perturbation mechanisms include Piecewise Mechanism~\cite{wang2019collecting} and Square Wave~\cite{li2020estimating}). 
By leveraging all perturbed results provided by different services, we can prevent further privacy leakage of user data while also enhancing data utility. However, aggregating statistical results from different services poses several challenges:


\begin{itemize}
	\item
	Perturbation mechanisms for numerical data are designed for different statistics --- some for mean, and others for distribution estimation. This raises the question of how to use distribution estimation mechanisms for mean estimation and vice versa.
	\item
	Perturbed results from different mechanisms fall in various ranges, and some are even biased, making them unsuitable for direct use in mean estimation. It is an open question how to maximize their utility for the best aggregate results.
\end{itemize}

To address these challenges, we first propose the Unbiased Averaging (UA) method for mean estimation, which converts biased perturbed results into unbiased ones before averaging. Based on this, we introduce an optimized method called User-level Weighted Averaging (UWA), which employs Bayesian updating~\cite{hacking1967slightly} to infer the distribution of original values and assigns different weights to their perturbed values, which obtains an optimized mean with minimal variance. Additionally, we propose a general method for distribution estimation, namely User-level Likelihood Estimation (ULE), which treats all perturbed results from a user as a whole for maximum likelihood estimation to infer the original distribution. By exploiting the correlation between perturbed values from the same user, our aggregation method extracts information about the original values and is agnostic to perturbation mechanisms, privacy budget settings, and the number of services. The main contributions of this paper are summarized as follows:
\begin{itemize}
	\item
	This is the first work to investigate a new data collection and aggregation paradigm in the context of LDP, where a data collector gathers and aggregates perturbed values from multiple services. 
	
	\item
	Under this paradigm, we propose Unbiased Averaging (UA) and User-level Weighted Averaging (UWA) for mean estimation, and User-level Likelihood Estimation (ULE) for distribution estimation. These methods can integrate all perturbed information to aggregate accurate mean or distribution statistics.
	
	\item
	We conduct extensive experiments, and show that our approach significantly improves the performance of any single mechanism in either mean or distribution estimation.
\end{itemize}

The rest of this paper is organized as follows. In Section \ref{preliminaries}, we introduce some preliminaries of LDP and existing LDP perturbation mechanisms for numerical data. Section \ref{problemdefinition} formally presents the system framework and problem definition. In Section \ref{Tech1}, we propose the mean estimation method, followed by the distribution estimation method in Section \ref{Tech2}. Section \ref{exp} provides extensive experimental evaluations. We then review related studies in Section \ref{relatedwork} and conclude the paper in Section \ref{conclusion}. 
\section{Preliminaries}
\label{preliminaries}

\subsection{Local Differential Privacy}
LDP is a promising privacy protection technique that enables users to perturb their data before publishing. It is defined as follows:

\begin{definition}
	A randomized algorithm $\mathcal{A}$ satisfies $\epsilon$-local differential privacy ($\epsilon$-LDP), if and only if for any two values $x$ and $x'$, and any possible output $y \in \text{Range}(\mathcal{A})$, the following condition holds:
	\begin{equation*}
		\setlength{\abovedisplayskip}{1pt} 
		\setlength{\belowdisplayskip}{1pt} 
		\frac{P[\mathcal{A}(x)=y]}{P[\mathcal{A}(x')=y]} \leq e^{\epsilon}, \epsilon > 0.
	\end{equation*} 
\end{definition}

The core idea of LDP lies in its probabilistic nature: mechanism $\mathcal{A}$ maps any specific input to an output based on a probability distribution. This distribution is controlled by the privacy budget $\epsilon$. As a result, no one can determine an individual's true input by observing an output $y$. The privacy budget $\epsilon$, controls the trade-off between privacy protection and data utility. A smaller $\epsilon$ provides stronger privacy guarantees but potentially reduces the accuracy of data analysis, while a larger $\epsilon$ allows for more accurate analysis at the cost of reduced privacy protection.

\subsection{Laplace Mechanism}
Laplace mechanism~\cite{dwork2006calibrating} is a commonly used differential privacy mechanism that achieves privacy protection by adding Laplace noise to the original data. The input value $v$ is always normalized into [-1, 1]. The Probability Density Function(PDF) of the perturbed result $\mathcal{A}_{lap}(v)$ can be expressed as:
\begin{equation*}
	\setlength{\abovedisplayskip}{1pt} 
	\setlength{\belowdisplayskip}{1pt} 
	\rho(\mathcal{A}_{lap}(v) = x | v) = \frac{1}{2b} \exp(-\frac{|x - v|}{b}),
\end{equation*}
where $b = \frac{2}{\epsilon}$. This distribution has a mean of $v$, so the estimate $\mathcal{A}_{lap}(v)$ is unbiased. In addition, the variance is $var_{lap}(v)=\frac{8}{\epsilon^2}$. If the service receives $n$ perturbed values, it simply computes their average $\frac{1}{n} \sum_{i=1}^n \mathcal{A}_{lap}(v_i)$ as an estimate of the original mean.

\subsection{Stochastic Rounding (SR)}

Stochastic Rounding (SR)~\cite{duchi2018minimax}, also known as the Duchi et al. solution, the core principle is rooted in the Bernoulli distribution, resulting in outputs that are both bounded and discrete.The input domain is typically normalized to [-1, 1], and the output value follows a two-point distribution. The probability mass function (PMF) of the output $\mathcal{A}_{sr}(v)$, given input $v$, can be formally written as:
\begin{equation}
	\setlength{\abovedisplayskip}{1pt} 
	\setlength{\belowdisplayskip}{1pt} 
	P(\mathcal{A}_{sr}(v) = x | v) = \left\{
	\begin{array}{ll}
		p, & \text{if } x = \frac{e^\epsilon + 1}{e^\epsilon - 1} \\
		1-p, & \text{if } x = -\frac{e^\epsilon + 1}{e^\epsilon - 1}\\
	\end{array}
	\right.
	\label{SR}
\end{equation}

where the probability $p$ is given by:
\begin{equation*}
	\setlength{\abovedisplayskip}{1pt} 
	\setlength{\belowdisplayskip}{1pt} 
	p = \frac{e^\epsilon - 1}{2e^\epsilon + 2} \cdot v + \frac{1}{2}.
\end{equation*}

In SR mechanism, $\mathcal{A}_{sr}(v)$ is an unbiased estimator of the input value $v$. In addition, the variance is  $var_{sr}(v)=(\frac{e^\epsilon+1}{e^\epsilon-1})^2 - v^2$.

\subsection{Piecewise Mechanism (PM)}
Different from SR, PM~\cite{wang2019collecting} perturbs values into a domain instead of two values. Given input value $v$ $\in$ $[-1,1]$, the PDF of output $\mathcal{A}_{pm}(v)$ $\in$ $[-C,C]$ has two parts: domain $[l(v),r(v)]$ and domain $[-C,l(v))\cup(r(v),C]$, where $C=\frac{e^{\epsilon/2}+1}{e^{\epsilon/2}-1}$, $l(v)=\frac{C+1}{2}v-\frac{C-1}{2}$ and $r(v)=l(v)+C-1$. Given input $v$, the perturbed value is in the range $[l(v),r(v)]$ with high probability and in the range $[-C,l(v))\cup(r(v), C]$ with low probability. 

The conditional probability density function of the PM, denoted as $\rho(\mathcal{A}_{pm}(v) = x | v)$, is given by:
\begin{equation*}
	\setlength{\abovedisplayskip}{1pt} 
	\setlength{\belowdisplayskip}{1pt} 
	\rho(\mathcal{A}_{pm}(v) = x | v) = \left\{
	\begin{array}{ll}
		\frac{e^{\epsilon}-e^{\epsilon/2}}{2(e^{\epsilon/2}+1)}, & \text{if } x \in [l(v), r(v)] \\[10pt]
		\frac{e^{\epsilon/2}-1}{{2(e^{\epsilon/2}+e^{\epsilon})}}, & \text{if } x \in [-C,l(v)) \cup (r(v),C] 
	\end{array}
	\right.
\end{equation*}

Perturbed value $\mathcal{A}_{pm}(v)$ is an unbiased estimator of input value $v$, the service can use the mean of collected values as an unbiased estimator of the mean of input values.  Moreover, the variance is
\begin{equation*}
	\setlength{\abovedisplayskip}{1pt} 
	\setlength{\belowdisplayskip}{1pt} 
	var_{pm}(v)= \frac{v^2}{e^{\epsilon/2}-1} + \frac{e^{\epsilon/2}+3}{3(e^{\epsilon/2}-1)^2}.
\end{equation*}

\subsection{Square Wave Mechanism (SW)}
\label{SW}
SW~\cite{li2020estimating} is designed for distribution estimation, and the main idea is to increase the probability of output value that can provide more information about the input value. The service receives perturbed values from users and reconstructs the distribution over a discrete numerical domain. Each user processes a floating value in the domain [0,1] and generates an output value in $[-b, 1+b]$, where $b=\frac{\epsilon e^{\epsilon}-e^{\epsilon}+1}{2e^{\epsilon}(e^{\epsilon}-\epsilon-1)}$. The conditional probability density function of the SW, denoted as $\rho(\mathcal{A}_{sw}(v) = x | v)$, is given by:
\begin{equation*}
	\setlength{\abovedisplayskip}{1pt} 
	\setlength{\belowdisplayskip}{1pt} 
	\rho(\mathcal{A}_{sw}(v) = x | v) = \left\{
	\begin{array}{ll}
		\frac{e^\epsilon}{2be^\epsilon+1}, & \text{if } x \in [v-b, v+b] \\[10pt]
		\frac{1}{2be^\epsilon+1}, & \text{if } x \in [-b,v-b) \cup (v+b,1+b],
	\end{array}
	\right.
\end{equation*}
where $p=\frac{e^\epsilon}{2be^\epsilon+1}$ and $q=	\frac{1}{2be^\epsilon+1}$. After reviving the perturbed values, the service aggregates the original distribution by using the Maximum Likelihood Estimation (MLE)~\cite{rossi2018mathematical} and reconstructs the distribution of original values. $\mathcal{A}_{sw}(v)$ is not an unbiased estimator of $v$, and its variance is:
\begin{equation*}
	\setlength{\abovedisplayskip}{1pt} 
	\setlength{\belowdisplayskip}{1pt} 
	\begin{split}
		var_{sw}(v) &=\frac{q(b^3-(b+v)^3-(b-v)^3+(b+1)^3)}{3} \\& - \frac{(q + 2bq + 4bpv - 4bqv)^2}{4} + \frac{2bp(b^2 + 3v^2)}{3}
	\end{split}
\end{equation*}

\section{System Framework and Problem Definition}
\label{problemdefinition}

Consider a system with $m$ services which have already collected user data. Each service $j$ employs a perturbation mechanism $\mathcal{A}_j$, which consumes a privacy budget of $\epsilon_j$. It's important to note that these perturbation mechanisms can be the same across different services or completely different. There are $n$ users and the $i$-th user has an original value $v_i$. For user $i$, the value $v_i$ is perturbed by mechanism $\mathcal{A}_j$, resulting in the perturbed value $\mathcal{A}_j(v_i)$ being sent to the $j$-th service. From a user's perspective, the set of data transmitted to all services is represented as $U_i = \{\mathcal{A}_1(v_i), \dots, \mathcal{A}_m(v_i)\}$. Conversely, from a service's viewpoint, the $j$-th service receives a set of perturbed data from all users, denoted as $C_j =\{\mathcal{A}_j(v_1), \dots, \mathcal{A}_j(v_n)\}$.

In addition to the $m$ services that have already collected data sets, we also consider a data collector $S$. $S$ will collect data from these services and aggregate the perturbed data sets to produce an accurate statistical result\footnote{This data collector $S$ can be any one of the $m$ services, and it can collaborate with the other $m-1$ services to enhance performance.}. The specific process is illustrated in Figure \ref{fig:system}. All users perturb their data according to the given perturbation mechanism $\mathcal{A}_j$ with $\epsilon_j$ and then upload the perturbed value to the $j$-th service (Step \textcircled{1}). Each service can perform its own statistical analysis to obtain the desired results. $S$ collects the perturbed data set $\{C_1,\dots, C_m\}$ from services (Step \textcircled{2}). Finally, $S$ utilizes these data to aggregate an accurate mean and distribution estimation (Step \textcircled{3}).

\begin{figure}
	\centering
	\includegraphics[width=0.5\textwidth]{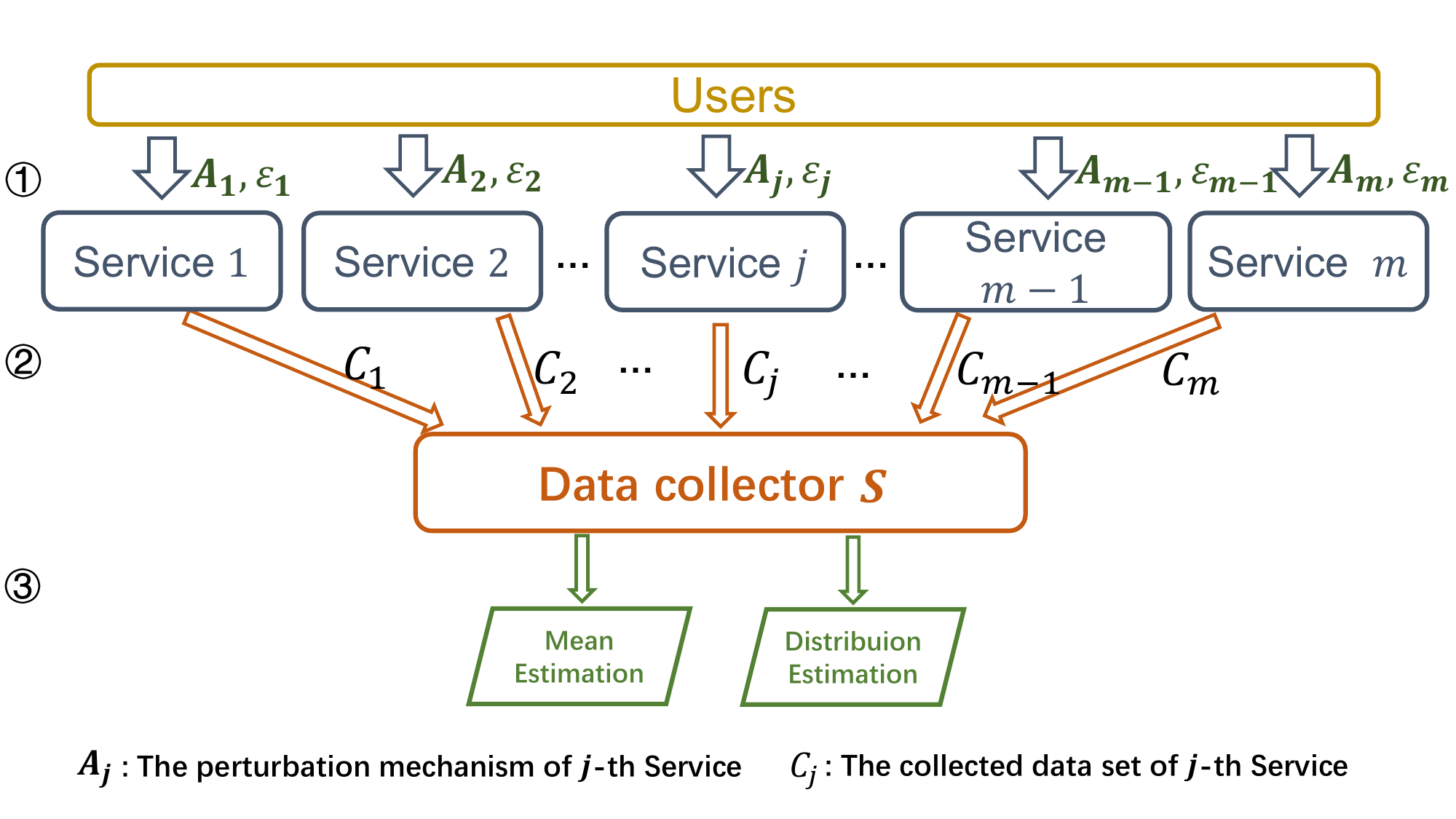}
	\caption{The multiple service data collecting system}
	\label{fig:system}
\end{figure}

\noindent\textbf{Mean estimation for multiple services.}
Let $M$ denote the mean of the original values from all users, which can be expressed as:
\begin{equation*}
	\setlength{\abovedisplayskip}{1pt} 
	\setlength{\belowdisplayskip}{1pt} 
	M = \frac{1}{n} \sum_{i=1}^{n} v_i.
\end{equation*}
Data collector $S$ accesses the maximum amount of perturbed information, which can be represented as the set $\{C_1, \dots, C_m\}$. This set, representing data sent by all users, can also be expressed as $\{U_1,\dots,U_n\}$.

The first task is to find an algorithm $\mathbb{A}_m$ that, based on the perturbed information available, estimates the mean $\hat{M}$ as follows:
\begin{equation*}
	\setlength{\abovedisplayskip}{1pt} 
	\setlength{\belowdisplayskip}{1pt} 
	\hat{M} =  \mathbb{A}_m(C_1, \dots, C_m) =  \mathbb{A}_m(U_1, \dots, U_n).
\end{equation*}

Our objective is to ensure that the estimated mean $\hat{M}$ is as close as possible to $M$. To evaluate the quality of our mean estimation, we primarily use the Mean Squared Error (MSE)~\cite{Mood1963Introduction}.

\noindent\textbf{Distribution estimation for multiple services.}
In this paper, we will also investigate the distribution estimation of numerical data. For practical purposes, such as computational efficiency and ease of analysis, we often approximate a continuous distribution with a discrete one. This involves discretizing the original values' distribution into $l$ buckets, resulting in the discrete ground truth distribution $D = \{d_1, \dots, d_l\}$.

Similar to the setting for mean estimation, data collector $S$ possesses the maximum amount of perturbed information represented as $\{C_1, \dots, C_m\}$. We need to propose a distribution estimation algorithm $\mathbb{A}_d$ to estimate the distribution $D$. This estimation can be expressed as:
\begin{equation*}
	\setlength{\abovedisplayskip}{1pt} 
	\setlength{\belowdisplayskip}{1pt} 
	\hat{D} = \mathbb{A}_d(C_1, \dots, C_m) = \mathbb{A}_d(U_1, \dots, U_n).
\end{equation*}
where $\hat{D}$ is the estimated discrete density distribution. The probability associated with each bucket in $\hat{D}$ is given by:
\begin{equation*}
	\setlength{\abovedisplayskip}{1pt} 
	\setlength{\belowdisplayskip}{1pt} 
	\hat{D} = \{\hat{d}_1, \dots, \hat{d}_t, \dots, \hat{d}_l\},
\end{equation*}
and these probabilities must satisfy the condition:
\begin{equation*}
	\setlength{\abovedisplayskip}{1pt} 
	\setlength{\belowdisplayskip}{1pt} 
	\sum_{t=1}^{l} \hat{d}_t = 1, \quad \hat{d}_t > 0 \quad \text{for all } t = 1, 2, \ldots, l
\end{equation*}

Our goal is to achieve a distribution estimate $\hat{D}$ that closely approximates the true distribution $D$. The quality of this approximation is evaluated using either the Kullback-Leibler (KL) divergence \cite{manning1999foundations} or the Jensen-Shannon (JS) divergence \cite{dagan1997similarity}.

\section{Mean Estimation}
\label{Tech1}
A naive way to estimate the mean $\hat{M}$ is to average all the perturbed results directly. However, this approach typically requires the perturbed values to be unbiased. Therefore, in Section 4.1, we propose a method to convert biased mechanisms into unbiased ones to facilitate averaging. Subsequently, Section 4.2 and Section 4.3 introduce an optimization method that assigns weights to each perturbed result based on its variance, thereby achieving better utility.

\subsection{Unbiased Averaging (UA)}
When all the perturbed results are unbiased, i.e., the expected output is equal to the input value, the direct method for mean estimation is to average all the collected data directly, as follows:
\begin{equation}
	\setlength{\abovedisplayskip}{0pt} 
	\setlength{\belowdisplayskip}{0pt} 
	\label{UA}
	\hat{M} = \frac{\sum_{j=1}^{m} \sum_{i=1}^{n} \mathcal{A}_j(v_i)}{n \times m}.
\end{equation}

Not all perturbation mechanisms are unbiased - the SW mechanism is a notable example of a biased mechanism for mean estimation. To effectively utilize these biased perturbed values, it is necessary to transform them into unbiased ones in order to accurately estimate the mean. We propose an unbiased method, which is presented in Theorem \ref{unbiase_theory}.

\begin{theorem}
	\label{unbiase_theory}
	\textbf{(Unbiasing Process)} Given a perturbation mechanism $\mathcal{A}$ and an input $v$, the perturbed value follows a distribution related to $v$, and the expectation of this perturbed value is denoted as $E_\mathcal{A}(v) = y$. If $y \neq v$, we say the mechanism is biased. Assume that the function $E_\mathcal{A}(v)$ is invertible, i.e., there exists a function $E_\mathcal{A}^{-1}$ such that $E_\mathcal{A}^{-1}(y) = v$. Then, $E_\mathcal{A}^{-1}(y)$ is an unbiased estimation of the original input $v$.
\end{theorem}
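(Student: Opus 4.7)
The plan is to show that applying the inverse map $E_\mathcal{A}^{-1}$ to the random perturbed value $\mathcal{A}(v)$ produces an estimator whose expectation equals $v$. I would start by writing out the proposed estimator $\hat{v} = E_\mathcal{A}^{-1}(\mathcal{A}(v))$ and its expectation $E[\hat{v}] = \int E_\mathcal{A}^{-1}(x)\,\rho(\mathcal{A}(v)=x\mid v)\,dx$. The key identity one wants is $E[E_\mathcal{A}^{-1}(\mathcal{A}(v))] = E_\mathcal{A}^{-1}(E[\mathcal{A}(v)]) = E_\mathcal{A}^{-1}(y) = v$. Once this swap of expectation and $E_\mathcal{A}^{-1}$ is justified, the result follows from the definition of the inverse.

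First I would make the swap rigorous by observing that for the mechanisms in this paper the expectation function $E_\mathcal{A}(v)$ turns out to be affine in $v$, i.e.\ $E_\mathcal{A}(v)=\alpha v+\beta$ for constants $\alpha,\beta$ depending only on $\epsilon$. Then $E_\mathcal{A}^{-1}(y)=(y-\beta)/\alpha$ is also affine, and by linearity of expectation $E[(\mathcal{A}(v)-\beta)/\alpha] = (E[\mathcal{A}(v)]-\beta)/\alpha = v$, which is exactly what is needed. This reduces the general theorem to the concrete task of verifying that $E_\mathcal{A}$ is affine (hence invertible) for each mechanism under consideration.

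Second, I would carry out that verification for the SW mechanism, which is the main biased mechanism used in the paper. Integrating $x$ against the PDF $\rho(\mathcal{A}_{sw}(v)=x\mid v)$ over the high-probability interval $[v-b,v+b]$ (with density $p$) and the low-probability set $[-b,v-b)\cup(v+b,1+b]$ (with density $q$), the $v^{2}$ terms produced by the two low-probability pieces cancel, and the result collapses to an expression of the form $2b(p-q)v + \tfrac{q(1+2b)}{2}$, which is linear in $v$. Inverting this affine map gives an explicit closed-form $E_\mathcal{A}^{-1}$ that can be applied to any perturbed value produced by SW.

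The main obstacle is that invertibility of $E_\mathcal{A}$ alone is not enough for the claim: in general $E[f(X)] \neq f(E[X])$ by Jensen's inequality whenever $f$ is nonlinear. So the real content of the argument is pinning down affineness of $E_\mathcal{A}$ for the mechanisms of interest, which is what legitimises the commutation of expectation with $E_\mathcal{A}^{-1}$ and therefore yields the unbiasedness $E[\hat{v}]=v$.
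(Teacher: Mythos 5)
Your proof is correct, and it is in fact more careful than the paper's own argument. The paper's proof is the one-line chain $E[E_\mathcal{A}^{-1}(y)] = E[E_\mathcal{A}^{-1}(E_\mathcal{A}(v))] = E[v] = v$, which silently performs exactly the commutation you flag as the crux: it treats $E_\mathcal{A}^{-1}$ applied to the \emph{expectation} $y=E_\mathcal{A}(v)$ (a constant, for which the claim is vacuous) as if it were $E_\mathcal{A}^{-1}$ applied to the random output $\mathcal{A}(v)$ (the estimator actually used in the case study), and never justifies why expectation passes through $E_\mathcal{A}^{-1}$. As you correctly observe, invertibility alone does not give this --- Jensen's inequality blocks it for nonlinear $E_\mathcal{A}^{-1}$ --- so the theorem as literally stated is false in full generality, and the missing ingredient is precisely that $E_\mathcal{A}$ is affine in $v$ for the mechanisms considered. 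Your SW verification matches the paper's case study: the paper computes $E(\mathcal{A}_{sw}(v)) = \frac{q}{2}+qb+2b(p-q)v$, which is your $2b(p-q)v+\frac{q(1+2b)}{2}$, and its debiased estimator $\mathcal{A}_{usw}(v)=\frac{\mathcal{A}_{sw}(v)-q/2-qb}{2b(p-q)}$ is exactly the affine inverse you describe. What your route buys is a proof that actually establishes unbiasedness of the estimator the paper uses; what it costs is generality --- the theorem must be read as restricted to mechanisms with affine expectation maps, which is the class the paper in fact works with.
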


\begin{proof}
	To prove that $E_\mathcal{A}^{-1}(y)$ is an unbiased estimation of the original input $v$, we need to show that $E[E_\mathcal{A}^{-1}(y)] = v$. The proof proceeds as follows:
	\begin{align*}
		\setlength{\abovedisplayskip}{0pt} 
		\setlength{\belowdisplayskip}{0pt} 
		E[E_\mathcal{A}^{-1}(y)] &= E[E_\mathcal{A}^{-1}(E_\mathcal{A}(v))] = E[v]
	\end{align*}
	
	Since the expectation of a random variable is equal to its expected value, i.e., $E[v] = v$. Therefore, $E_\mathcal{A}^{-1}(y)$ is an unbiased estimator of $v$.
\end{proof}

\noindent\textbf{Case study.} We present an example of an unbiasing process for the SW mechanism. For the original data $v$, the expectation of $\mathcal{A}_{sw}(v)$ can be calculated as follows:
\begin{align*}
	E(\mathcal{A}_{sw}(v)) 
	& = \int_{-b}^{v-b} xq \, dx + \int_{v-b}^{v+b} xp \, dx + \int_{v+b}^{1+b} xq \, dx \\
	&= q \left[ \frac{x^2}{2} \right]_{-b}^{v-b} + p \left[ \frac{x^2}{2} \right]_{v-b}^{v+b} + q \left[ \frac{x^2}{2} \right]_{v+b}^{1+b} \\
	&= \frac{q}{2}+qb+ 2b(p-q)v.
\end{align*}
where $b,p,q$ are listed in Section \ref{SW}. Let $\mathcal{A}{usw}(v)$ denote the unbiased result of $v$ using the unbiased SW method. According to Theorem \ref{unbiase_theory}, $\mathcal{A}{usw}(v)$ can be expressed as:
\begin{equation*}
	\setlength{\abovedisplayskip}{0pt} 
	\setlength{\belowdisplayskip}{0pt} 
	\mathcal{A}_{usw}(v)= \frac{\mathcal{A}_{sw}(v)-q/2-qb}{2b(p-q)}.
\end{equation*}
The variance for unbiased SW is:
\begin{equation*}
	\setlength{\abovedisplayskip}{0pt} 
	\setlength{\belowdisplayskip}{0pt} 
	var_{usw}(v)= \frac{var_{sw}(v)}{(2b(p-q))^2}.
\end{equation*}

\subsection{Posterior Distribution Estimation for Individual Users}
In the UA algorithm, directly averaging all perturbed results assumes an equal contribution from each perturbed value. However, to achieve better utility, mechanisms with lower variance should be assigned higher weights. The challenge lies in determining the variance of each mechanism's estimation results without prior knowledge of the original distribution. Moreover, the variance heavily relies on the privacy budget $\epsilon$, original value $v$, and the perturbation mechanisms themselves.

Figure \ref{variance} illustrates the variance for several common LDP perturbation mechanisms. Specifically, Figure \ref{variance}(a) depicts the variance with varying $v$ when $\epsilon = 1$, while Figure \ref{variance}(b) shows the variance with varying $\epsilon$ when $v = 1$. Although these two figures do not encompass all possible variance scenarios, they demonstrate that the variance of a single mechanism changes with both $v$ and $\epsilon$. Furthermore, the relative ranking of variances among different mechanisms shifts under different parameter settings.

\begin{figure}[th]
	\centering
	\vspace{-0.05\linewidth}
	\hspace{-0.05\linewidth}
	\subfigure[$\epsilon=1$. \label{fig:subfig1}]{
		\centering
		\includegraphics[width=0.49\linewidth]{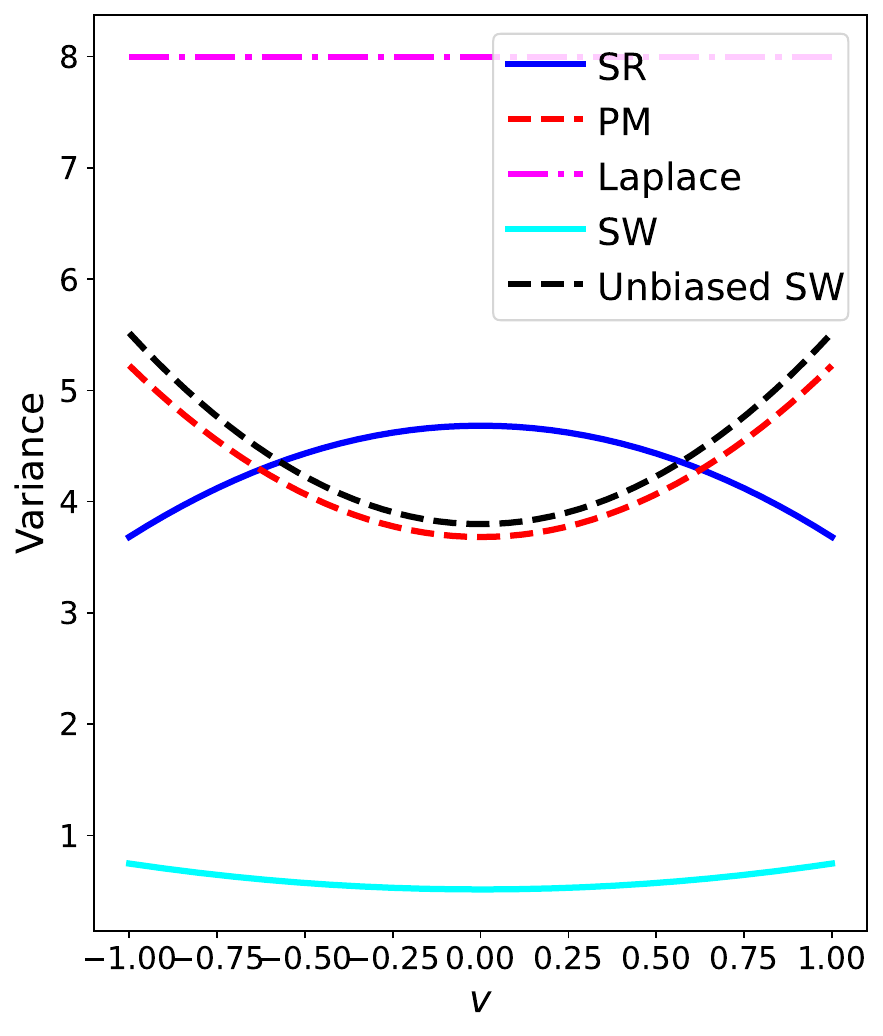}
	}
	\subfigure[$v=1$. \label{fig:subfig2}]{
		\centering
		\includegraphics[width=0.49\linewidth]{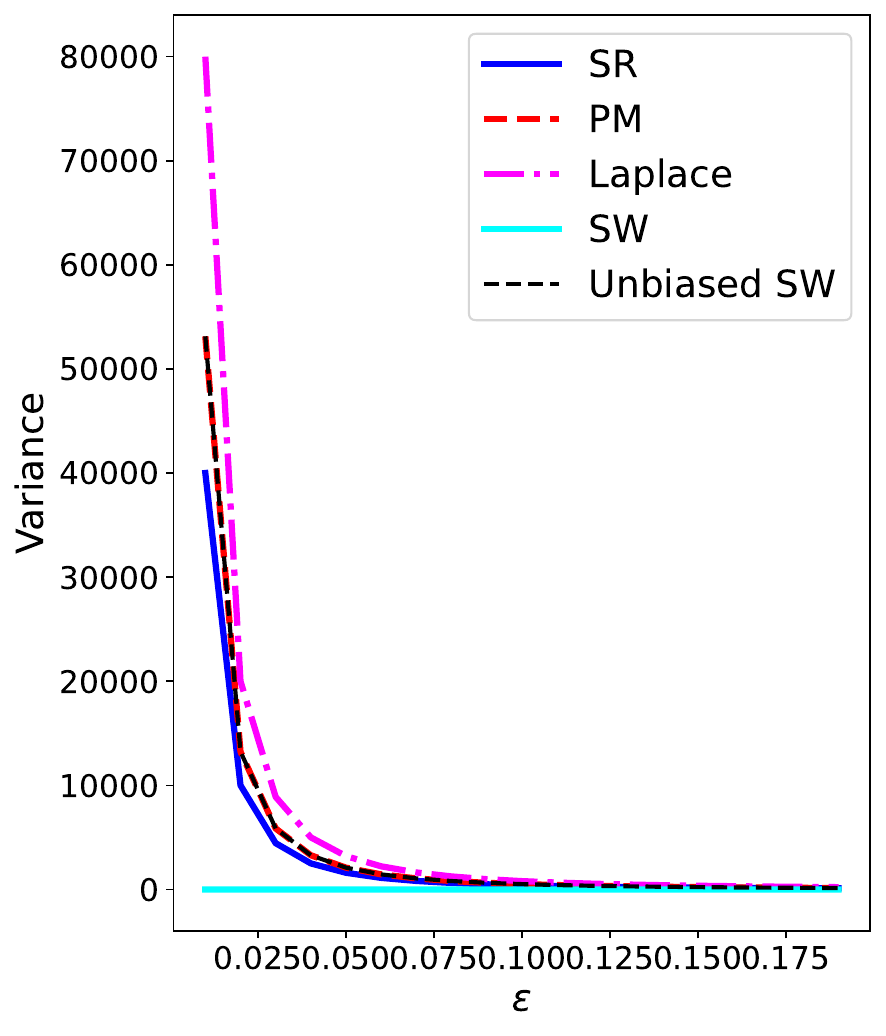}
	}
	\caption{Variance comparison with varying $v$ and $\epsilon$}
	\label{variance}
	\vspace{-0.12in}
\end{figure}

If we knew the ground truth of $v$ and the corresponding $\epsilon$, we could select the optimal mechanism to perturb the data and thus achieve minimum variance. However, this approach is impractical. Firstly, $v$ is unknown in real-world scenarios. Additionally, selecting only one mechanism's result would waste information from other perturbation results. Fortunately, since multiple perturbed values are transmitted from the same user, we can perform Bayesian inference to infer the PDF of the original value $v$ based on these perturbed values as its posterior distribution \cite{box2011bayesian}. This inference enables us to assign optimal weights to different perturbed results, ultimately leading to a minimal variance for more accurate mean estimation. Next, we will describe how we estimate the posterior distribution of each user's value in this section. Then, in Section 4.3, we will explain how to allocate weights based on the estimated posterior distributions for individual users.

\noindent\textbf{Bayesian Updating for Posterior Distribution Estimation.}
Let us consider a single user's original data, denoted as $v$, which corresponds to $m$ observed perturbed values $\{\mathcal{A}_1(v), \ldots, \mathcal{A}_m(v)\}$. Our goal is to infer the posterior distribution of $v$ using these observed values through Bayesian updating. The utilization of each observed value shall be referred to as an update.

Let $f_j(v)$ represent the current estimate of the posterior distribution of $v$ after updating $j$ times. Given the absence of prior information about $v$, we initialize its prior distribution $f_0(v)$ as a uniform distribution. We start by considering how to update the distribution using only one observed value, $\mathcal{A}_1(v)$. 

Let $P(\mathcal{A}_1(v) \mid v)$ denote the probability of obtaining the perturbed value $\mathcal{A}_1(v)$ given the original value $v$. Using Bayes' theorem, we can update our prior belief $f_0(v)$ based on the observed $\mathcal{A}_1(v)$, obtaining the posterior probability $f_1(v)$ as follows:
\begin{equation}
	\setlength{\abovedisplayskip}{0pt} 
	\setlength{\belowdisplayskip}{0pt} 
	f_1(v)=P(v \mid \mathcal{A}_1(v)) = \frac{P(\mathcal{A}_1(v) \mid v) f_0(v)}{P(\mathcal{A}_1(v))},
	\label{PP1}
\end{equation}
where
\begin{equation}
	\setlength{\abovedisplayskip}{0pt} 
	\setlength{\belowdisplayskip}{0pt} 
	P(\mathcal{A}_1(v)) = \int P(\mathcal{A}_1(v) \mid v) f_0(v) dv.
	\label{PM1v}
\end{equation}

To facilitate computation, we discretize Equation \ref{PM1v}. We uniformly divide the input domain into $h$ buckets $\{B_1,\dots,B_k, \dots, B_h\}$, with $\mu_k$ denoting the midvalue of the $k$-th bucket. The posterior distribution can be represented as a set of pairs, each containing a bucket midvalue and its corresponding probability estimate: $f_j(v) = {(\mu_1, f_j^1), \dots, (\mu_k, f_j^k), \dots, (\mu_h, f_j^h)}$ ($j\in\{0,\dots,m\}$). Consequently, Equation \ref{PM1v} can be rewritten as:
\begin{equation*}
	\setlength{\abovedisplayskip}{0pt} 
	\setlength{\belowdisplayskip}{0pt} 
	P(\mathcal{A}_1(v)) = \sum_{k=1}^h P(\mathcal{A}_1(v) \mid \mu_k) f_0^k.
\end{equation*}

To calculate $P(\mathcal{A}_1(v)\mid \mu_k)$, we discretize the output domain by uniformly partitioning it into multiple buckets in a similar manner. For clarity, we denote $\mathcal{A}_1(v)\in B'$ as the event that $\mathcal{A}_1(v)$ falls into a specific bucket $B'$. Let $s$ be the width of bucket $B'$, with $\mu'$ representing its midpoint. We can then calculate $P(\mathcal{A}_1(v)\in B' \mid \mu_k)$ as follows:
\begin{equation}
	\setlength{\abovedisplayskip}{0pt} 
	\setlength{\belowdisplayskip}{0pt} 
	P(\mathcal{A}_1(v)\in B' \mid \mu_k )= s\cdot \rho_{\mu_k,\mathcal{A}_1}(\mu'),
	\label{Bucket}
\end{equation}
where $\rho_{\mathcal{A}_1,\mu_k}(\mu')$ denotes the probability density at point $\mu'$. This probability density function is determined by the input value $\mu_k$ and the mechanism $\mathcal{A}_1$. The probability density functions for different mechanisms are detailed in Section 2. To calculate $P(\mathcal{A}_1(v) \mid v)$, we need to find the nearest bucket midvalue $\mu_t$ to the input $v$. The calculation of $P(\mathcal{A}_1(v) \mid v)$ is then equivalent to $P(\mathcal{A}_1(v) \mid \mu_t)$.

As the number of observations increases, we can gradually update the posterior distribution. The result after updating with the first $j-1$ observations is:
\begin{equation*}
	\setlength{\abovedisplayskip}{0pt} 
	\setlength{\belowdisplayskip}{0pt} 
	\begin{split}
		f_{j-1}(v) =P(v|\mathcal{A}_1(v), ..., \mathcal{A}_{j-1}(v)).
	\end{split}
\end{equation*}

Based on the $(j-1)$-th observed value $\mathcal{A}_{j-1}$, we update and obtain $f_j(v)$ as follows:
\begin{equation}
	\setlength{\abovedisplayskip}{0pt} 
	\setlength{\belowdisplayskip}{0pt} 
	f_j(v)=P(v|\mathcal{A}_1(v), ..., \mathcal{A}_j(v)) = \frac{P(\mathcal{A}_j(v)|v)f_{j-1}(v)}{P(\mathcal{A}_j(v)|\mathcal{A}_1(v), ..., \mathcal{A}_{j-1}(v))},
	\label{PM2v}
\end{equation}
where
\begin{equation*}
	\setlength{\abovedisplayskip}{0pt} 
	\setlength{\belowdisplayskip}{0pt} 
	\begin{split}
		P(\mathcal{A}_j(v)|\mathcal{A}_1(v), ..., \mathcal{A}_{j-1}(v)) = \sum_{k=1}^h P(\mathcal{A}_j(v)|\mu_k))f_{j-1}(\mu_k).
	\end{split}
\end{equation*}

Algorithm \ref{BayesianUpdating} outlines the procedure for estimating the posterior distribution of a single value $v$. The process begins by initializing the prior distribution of $v$ as a uniform distribution (line 1). Subsequently, the algorithm sequentially processes $m$ perturbed observations. Each iteration calculates the conditional probability of the current observation and updates the posterior distribution of $v$ using Bayes' theorem (line 2). This iterative process continues until all $m$ observations have been processed. The algorithm concludes by returning the final updated posterior distribution of $v$, denoted as $f_m(v)$ (line 3).

\begin{algorithm}
	\caption{Bayesian updating}
	\begin{flushleft}
		{\bf Input:}
		Original data $v$; the observed perturbed results $\{\mathcal{A}_1(v), \mathcal{A}_2(v), \ldots, \mathcal{A}_m(v)\}$, the size of discrete buckets for distribution estimation $h$.
		\\
		{\bf Output:}
		Updated posterior distribution of $v$: $f_m(v)$
	\end{flushleft}
	\begin{algorithmic}[1]
		\State Initialize $f_0(v) = \{f_0^1(v), \ldots, f_0^k(v),\dots,f_0^h(v)\}, f_0^k(v)=\frac{1}{h}$
		\State\For{$j = 1$ to $m$}
		{\begin{equation*}
				\begin{split}
					&p(\mathcal{A}_j(v)|\mathcal{A}_1(v),  \! \ldots \! , \mathcal{A}_{j-1}(v))= \sum_{j=1}^h P(\mathcal{A}_j(v)|v) \cdot f_{j-1}(v)
					\\
					&f_j(v) = P(v|\mathcal{A}_1(v),  \! \ldots \! , \mathcal{A}_i(v)) \!  =  \! \frac{P(\mathcal{A}_j(v)|v) \cdot f_{j-1}(v)}{P(\mathcal{A}_j(v)|\mathcal{A}_1(v), \ldots, \mathcal{A}_{j-1}(v))}
				\end{split}
			\end{equation*}
		}
		\State \Return $f_m(v) = \{(\mu_1, f_m^1), \dots, (\mu_k, f_m^k), \dots, (\mu_h, f_m^h)\}$
	\end{algorithmic}
	\label{BayesianUpdating}
\end{algorithm}

\noindent\textbf{Probability updating for the Laplace Mechanism.} 
In Equation \ref{PM1v}, the method for computing $P(\mathcal{A}_j(v) \mid v)$ is to uniformly discretize both the input and output domains into multiple buckets. This approach is relatively straightforward for PM, SW, and SR, as their input domains and output domains are bounded. However, the Laplace mechanism poses a challenge. Since the Laplace distribution is unbounded, its output domain is also unbounded. This prevents the use of a simple discretization approach, unlike the other mechanisms. In addition, the Laplace mechanism perturbs the original values by directly adding noise. This method of perturbation makes the calculation of the discretized perturbation probability less intuitive. 
To leverage the information provided by the Laplace mechanism, we need to develop a discretization approach for its unbounded output domain in order to determine the posterior distribution of the original value $v$.

Firstly, similar to other mechanisms, we discrete input domain into $h$ buckets, with the midpoint of each bucket denoted as $\{\mu_1,\dots,$\\$\mu_k,\dots,\mu_h\}$. Based on the different $\mu_k$ values, we can obtain the perturbed result $\mathcal{A}_{lap}(\mu_k)$ following these Laplace distributions:
\begin{equation}
	\setlength{\abovedisplayskip}{0pt} 
	\setlength{\belowdisplayskip}{0pt} 
	\rho(\mathcal{A}_{lap}(\mu_k) = x | \mu_k) = \frac{\epsilon_{lap}}{4} \exp\left(-\frac{\epsilon_{lap}|x-\mu_k|}{2}\right).
	\label{LAP}
\end{equation}

The perturbed value is $\mathcal{A}_{lap}(v)$, which is a constant. Our task is to calculate $P(\mathcal{A}_{lap}(v) \mid \mu_k)$ ($k \in {1, \dots, h}$) as shown in Equation \ref{Bucket}. To do so, we need to appropriately determine an output bucket $B'$. The selection of this $B'$ should satisfy three key requirements: (a) the bucket must be bounded to ensure practical computability; (b) it should accurately reflect the sensitivity of the output $\mathcal{A}_{lap}(v)$ to different input values $\mu_k$, thereby capturing the probabilistic variation characteristics of the Laplace distribution; and (c) for a given output $\mathcal{A}_{lap}(v)$, the width of the corresponding bucket should be consistent across different values of $\mu_k$, maintaining fairness in our probability estimations.

For requirement (a), we need to truncate the unbounded output domain into a bounded one. The probability of the Laplace distribution decreases with increasing distance from the axis of symmetry, so the information contained in the tails becomes less significant. This characteristic allows us to truncate the distribution without significant information loss, thereby enabling us to determine an appropriate $B'$. We define the output bucket $B'$ by the truncation distance $\tau$:
\begin{equation*}
	\setlength{\abovedisplayskip}{0pt} 
	\setlength{\belowdisplayskip}{0pt} 
	B' = \left[\mathcal{A}_{lap}(v) - \tau, \mathcal{A}_{lap}(v) + \tau\right],
\end{equation*}
where the truncation distance $\tau$ is defined as:
\begin{equation*}
	\setlength{\abovedisplayskip}{0pt} 
	\setlength{\belowdisplayskip}{0pt} 
	\tau = \sqrt{|\mathcal{A}_{lap}(v)|}
\end{equation*}
Since $\tau$ depends solely on $|\mathcal{A}_{lap}(v)|$, which is a constant, the length of the output bucket $B'$ remains consistent across different $\mu_k$ values. This choice of $\tau$ ensures that the bucket $B'$ satisfies requirements (b) and (c).

The probability $P(\mathcal{A}_{lap}(v)\in B'|\mu_k)$ is given by:
\begin{equation*}
	\setlength{\abovedisplayskip}{0pt} 
	\setlength{\belowdisplayskip}{0pt} 
	P(\mathcal{A}_{lap}(v)\in B'|\mu_k)=\int_{\mathcal{A}_{lap}(v) - \tau} ^{\mathcal{A}_{lap}(v) + \tau} 	\rho(\mathcal{A}_{lap}(\mu_k) = x | \mu_k) dx.
\end{equation*}

Finally, we normalize this value to ensure that the sum of all $P(\mathcal{A}_{lap}(v)|\mu_t)$ for $t \in \{1, \dots, h\}$ equals 1. And then we can obtain $P(\mathcal{A}_{lap}(v)|\mu_k)$ is
\begin{equation*}
	\setlength{\abovedisplayskip}{0pt} 
	\setlength{\belowdisplayskip}{0pt} 
	P(\mathcal{A}_{lap}(v)|\mu_k) = \frac{P(\mathcal{A}_{lap}(v)\in B'|\mu_k)}{\sum_{t=1}^{h} P(\mathcal{A}_{lap}(v)\in B'|\mu_t)}.
\end{equation*}

\subsection{User-level Weighted Averaging (UWA)}
After obtaining the posterior distribution for each user, we assign weights to different mechanisms based on this information. We propose the User-level Weighted Averaging (UWA) Algorithm to minimize variance under different settings. Before delving into the details of UWA, we need to explain how to calculate the variance based on the $f_m(v)=\{(\mu_1, f_m^1),\dots,(\mu_k, f_m^k),\dots,(\mu_h, f_m^h)\}$. The variance for $f_m(v)$ in terms of $\mathcal{A}_j$ is: 
\begin{equation}
	\setlength{\abovedisplayskip}{0pt} 
	\setlength{\belowdisplayskip}{0pt} 
	V_j(f_m(v))=\sum_{k=1}^{h}f_m^k var_j(\mu_k),
	\label{variancecalculation}
\end{equation}
where the variance $var_i(\mu_k)$ represents the output variance when the input value is $\mu_k$ and the perturbation mechanism $\mathcal{A}_i$ is applied. The methods for calculating variances of different mechanisms are detailed in Section \ref{preliminaries}.

\noindent\textbf{The procedure of UWA.} Algorithm \ref{UWA_algorithm} presents our user-level weighted averaging process. It takes the perturbation mechanisms, their corresponding privacy budgets, and perturbed results from $n$ users as input. For each user, the algorithm calculates posterior distributions using Algorithm \ref{BayesianUpdating} (line 1) and initializes weights (line 2). It then computes variances for each mechanism across the discretized values $\{\mu_1,\dots,\mu_h\}$, calculating $V_j(f_m(v_i))$ as the estimated variance for mechanism $\mathcal{A}_j$ (line 3). Based on $V_j(f_m(v_i))$, the algorithm determines optimal weights (line 4) and computes a weighted mean $v'_i$ for the $i$-th user(line 5). Finally, it aggregates the results by averaging the weighted means of all users (line 6).
\begin{algorithm}
	\caption{User-level weighted averaging algorithm}
	\begin{flushleft}
		{\bf Input:}
		Perturbation mechanisms $\{\mathcal{A}_1,\mathcal{A}_2,...,\mathcal{A}_m\}$, corresponding privacy budget $\{\epsilon_1,\epsilon_2,...,\epsilon_m\}$, perturbed result sets from $n$ users $\{U_1, U_2,..., U_n\}$.
		\\
		{\bf Output:}
		The aggregated mean $\hat{M}$
	\end{flushleft}
	\begin{algorithmic}[1]
		\\	\For {$i=1\ to\ n$}
		{ Calculate $f_m(v_i)=\{(\mu_1,f_m^1),\dots,(\mu_h,f_m^h)\}$ according to Algorithm \ref{BayesianUpdating}\\
			\quad Initialization weight: $w_t=\frac{1}{m},\   t=\{1,2,...,m\}$ \\ \quad
			\For{$j=1\ to\ m$} {\For{$t=1\ to\ h$}{
					Calculate $var_j(\mu_t)$ according to Section 2 
					
				}	 Calculate $V_j(f_m(v_i))$ according to Equation \ref{variancecalculation}}\\ \quad
			\For{$j=1\ to\ m$}{
				$w_j=[V_j(f_m(v_i))\sum_{t=1}^{h}\frac{1}{V_t(f_m(v_i))}]^{-1}$ }\\ \quad
			$v'_i=\sum_{j=1}^m w_j\mathcal{A}(v_i)$}
		\\  $\hat{M}=\sum_{i=1}^n\frac{1}{n}v'_i$
		\State \Return $\hat{M}$
	\end{algorithmic}
	\label{UWA_algorithm}
\end{algorithm}
\vspace{-0.1in}

For each original value $v$, we calculate a weighted mean $v'$. The weights are assigned to the perturbed values of $v$ to minimize variance, as shown in Theorem \ref{theoremofvar}.
\begin{theorem}
	\label{theoremofvar}
	Given the original value $v$ and corresponding variance $V_1(f_m(v)),\dots,V_m(f_m(v))$. The variance of $v'=\sum_{j=1}^m w_j\mathcal{A}_j(v)$ reaches the minimum, if the following formula holds:
	\begin{equation*}
		\setlength{\abovedisplayskip}{0pt} 
		\setlength{\belowdisplayskip}{0pt} 
		w_t=\frac{1}{V_t(f_m(v))\sum_{i=1}^{m}\frac{1}{V_i(f_m(v))}}, t\in\{1,\dots,m\}.
	\end{equation*}
	
	The minimal variance is:
	{\setlength{\abovedisplayskip}{0pt}
		\setlength{\belowdisplayskip}{0pt}
		\begin{equation*}
			\setlength{\abovedisplayskip}{0pt} 
			\setlength{\belowdisplayskip}{1pt} 
			Var(v')_{min}=[\sum_{t=1}^{m}\frac{1}{V_t(f_m(v))}]^{-1}.
	\end{equation*}}
\end{theorem}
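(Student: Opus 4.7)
My plan is to pose this as a constrained optimization problem. The expression $v' = \sum_{j=1}^m w_j \mathcal{A}_j(v)$ is a linear combination of the perturbed estimators. Since each $\mathcal{A}_j(v)$ is an unbiased estimator of $v$ (after the unbiasing process of Theorem \ref{unbiase_theory} if necessary), unbiasedness of $v'$ forces the constraint $\sum_{j=1}^m w_j = 1$, which is implicit in the theorem statement. Moreover, the perturbations from different services are applied independently, so the cross-terms in the variance vanish and I can write
\begin{equation*}
Var(v') = \sum_{j=1}^m w_j^2 \, V_j(f_m(v)).
\end{equation*}
The problem thus reduces to minimizing $\sum_j w_j^2 V_j$ subject to $\sum_j w_j = 1$, where I abbreviate $V_j := V_j(f_m(v))$.

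Next I would carry out the optimization by Lagrange multipliers. Setting
\begin{equation*}
L(w_1,\dots,w_m,\lambda) = \sum_{j=1}^m w_j^2 V_j - \lambda\Bigl(\sum_{j=1}^m w_j - 1\Bigr),
\end{equation*}
the first-order condition $\partial L/\partial w_t = 0$ gives $2w_t V_t = \lambda$, so $w_t = \lambda/(2V_t)$. Substituting into $\sum_t w_t = 1$ yields $\lambda/2 = 1/\sum_i (1/V_i)$, hence
\begin{equation*}
w_t = \frac{1}{V_t \sum_{i=1}^m 1/V_i},
\end{equation*}
which is exactly the claimed expression. Plugging back into $\sum_j w_j^2 V_j$ gives
\begin{equation*}
Var(v')_{\min} = \sum_{j=1}^m \frac{V_j}{V_j^2 \bigl(\sum_i 1/V_i\bigr)^2} = \frac{\sum_j 1/V_j}{\bigl(\sum_i 1/V_i\bigr)^2} = \Bigl[\sum_{t=1}^m \tfrac{1}{V_t}\Bigr]^{-1},
\end{equation*}
matching the stated minimum. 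To confirm this critical point is indeed a minimum rather than a saddle, I would note that the objective is a positive-definite quadratic form in $(w_1,\dots,w_m)$ (since each $V_j > 0$), so the unique stationary point on the affine constraint set is the global minimum. Equivalently, one can give a one-line proof via Cauchy--Schwarz: $1 = \bigl(\sum_j w_j\bigr)^2 = \bigl(\sum_j (w_j \sqrt{V_j})\cdot(1/\sqrt{V_j})\bigr)^2 \leq \bigl(\sum_j w_j^2 V_j\bigr)\bigl(\sum_j 1/V_j\bigr)$, with equality iff $w_j \sqrt{V_j} \propto 1/\sqrt{V_j}$, i.e. $w_j \propto 1/V_j$.

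The computation itself is routine; the only conceptual subtlety, and what I would flag as the main point to justify carefully, is the independence of $\mathcal{A}_1(v),\dots,\mathcal{A}_m(v)$ (which is what kills the covariance terms and reduces $Var(v')$ to $\sum_j w_j^2 V_j$) and the implicit unbiasedness constraint $\sum_j w_j = 1$. Both follow from the framework of Section \ref{problemdefinition}: each service perturbs independently with its own mechanism, and the unbiasing step in Section 4.1 ensures that every $\mathcal{A}_j(v)$ can be treated as an unbiased estimator of $v$, so that $E[v'] = v$ requires the weights to sum to one.
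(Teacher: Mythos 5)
Your proof is correct and follows essentially the same route as the paper's: express $Var(v')=\sum_j w_j^2 V_j(f_m(v))$ under the constraint $\sum_j w_j=1$, apply Lagrange multipliers, and substitute back to obtain the harmonic-mean form of the minimal variance. The only differences are that you additionally verify the stationary point is a global minimum (via positive definiteness, or the Cauchy--Schwarz alternative) and explicitly flag the independence and unbiasedness assumptions --- points the paper's proof uses implicitly without justification --- so your version is, if anything, slightly more complete.
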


\begin{proof}
	The variance of $v'$, which is a linear combination of $\mathcal{A}_t(v)$, can be written as:
	\begin{equation}
		\setlength{\abovedisplayskip}{1pt} 
		\setlength{\belowdisplayskip}{1pt} 
		\begin{split}
			Var(v')=Var(\sum_{t=1}^{m}w_{t}\mathcal{A}_t(v))=\sum_{t=1}^{m}w_{t}^{2}V_t(f_m(v)).
		\end{split}
		\label{variance_of_M}
	\end{equation}
	where $\sum{w_t}=1$.

	We regard the variance as a function of $w_t$, and the minimal variance is the extreme point of Equation \ref{variance_of_M}. By the Lagrangian method, we have:
	\begin{equation*}
		\setlength{\abovedisplayskip}{1pt} 
		\setlength{\belowdisplayskip}{1pt} 
		\mathcal{L}=\sum_{t=1}^{m}{w_t^{2}}V_t(f_m(v))+C_0(1-\sum_{t=1}^{h}w_t).
	\end{equation*}
	The first partial derivatives of $\mathcal{L}$ w.r.t. $w_t$ is $
	\frac{\partial \mathcal{L}}{\partial w_t}={2w_t}V_t(f_m(v))-C_0.$
	Let $\frac{\partial \mathcal{L}}{\partial w_t}=0$, then we have $wt=\frac{C_0}{2V_t(f_m(v))}$. Through the restriction $\sum_{t=1}^{m}w_t=1$, we figure out
	\begin{equation*}
		\setlength{\abovedisplayskip}{1pt} 
		\setlength{\belowdisplayskip}{1pt} 
		C_0=\frac{2}{\sum_{t=1}^{m}\frac{1}{V_t(f_m(v))}},\ w_t=\frac{1}{V_t(f_m(v))\sum_{i=1}^{m}\frac{1}{V_i(f_m(v))}}.
	\end{equation*}
	And the final minimal variance of $\tilde{M}$ is:
	\begin{equation*}
		\setlength{\abovedisplayskip}{1pt} 
		\setlength{\belowdisplayskip}{1pt} 
		{Var(v')}_{min}=[\sum_{t=1}^{m}\frac{1}{V_t(f_m(v))}]^{-1}.
	\end{equation*}
	\label{proof of cutoffpoint}
\end{proof}

Since users are mutually independent, the overall variance across all users will be minimized by summing the minimum variance attained for each individual perturbed value.
\begin{equation*}
	\setlength{\abovedisplayskip}{1pt} 
	\setlength{\belowdisplayskip}{1pt} 
	{Var(\{v'_1,\dots,v'_n\})}_{min}=\sum_{i=1}^n[\sum_{t=1}^{m}\frac{1}{V_t(f_m(v_i))}]^{-1}.
\end{equation*}
While UWA and UA perform similarly when services have comparable estimation performance, UWA demonstrates significant advantages in scenarios when different services exhibit varying estimation performance. In such cases, UA's uniform weighting leads to results that are less effective, whereas UWA maintains effectiveness through appropriate weight assignment.

\section{User-level Likelihood Estimation (ULE) for Distribution Estimation}
\label{Tech2}
Some mechanisms (e.g., the SR) are not designed for estimating distributions, and their perturbations can degrade a significant amount of effective information, leading to high inaccuracies if these perturbed results are used directly to aggregate the original distribution. However, we can make the most of all the perturbed results by treating these perturbed values for each user as a whole, because the perturbed results from the same user all stem from the same original value. By applying the Maximum Likelihood Estimation (MLE), we can better estimate the distribution of the entire user population (to clarify, this is distinct from Section 4.2, where we focus on estimating the posterior distribution of each individual user's value). Based on this method, we propose a general framework called User-level Likelihood Estimation (ULE), which leverages information from different services to provide more accurate distribution estimates.

\noindent\textbf{Likelihood estimation.} Recall that for each value $v_i$, its perturbed values from $m$ mechanisms are represented as $U_i = \{\mathcal{A}_1(v_i), \dots,$\\$ \mathcal{A}_m(v_i)\}$. For each service $j (j\in\{1,2,\dots,m\})$, its perturbation mechanism is $A_j$, and its collection is $C_j = \{\mathcal{A}_j(v_1), \dots,$ $\mathcal{A}_j(v_n)\}$. To adopt the MLE, we formulate the log-likelihood function of $F$ as follows:
\begin{equation}
	\label{F1}
	\setlength{\abovedisplayskip}{0pt}
	\setlength{\belowdisplayskip}{0pt}
	\begin{split}
		&L(D)=lnP[C_1,\dots,C_m|D]
	\end{split}
\end{equation}
where $D=\{d_1,\dots,d_l\}$ represents the original distribution over $l$ buckets for all users. Specifically, we discretize the input domain into $l$ buckets, denoted as $\{B_1,\dots,B_k,\dots,B_l\}$, where $d_k$ represents the proportion of data falling within bucket $B_k$.

We consider distribution from the perspective of users, and the Equation \ref{F1} can be written as:
\begin{equation}
	\label{F2}
	\setlength{\abovedisplayskip}{0pt}
	\setlength{\belowdisplayskip}{0pt}
	\begin{split}
		L(D)&=lnP[U_1,\dots,U_n|D].\\
	\end{split}
\end{equation}

\textbf{Expectation-Maximization (EM) algorithm.} Our goal is to determine the parameter $D$ that maximizes the likelihood function $L(D)$. To achieve this, we employ the Expectation-Maximization (EM) algorithm, inspired by its successful application in distribution recovery estimation for CFO and SW ~\cite{2016Building, ren2018textsf, li2020estimating}. The EM algorithm converges to the MLE of the original distribution $D$ according to the following theorem.

\begin{theorem}
	The EM algorithm converges to the maximum-likelihood estimator of the true frequencies D.
\end{theorem}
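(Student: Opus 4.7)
The plan is to follow the classical Dempster--Laird--Rubin argument, instantiating it for the latent bucket assignments that arise when multiple services share the same underlying value per user. First, I would make the latent structure explicit: for user $i$, introduce a latent indicator $Z_i \in \{1,\dots,l\}$ specifying which bucket $B_k$ contains $v_i$, so that $P[Z_i = k \mid D] = d_k$ and $P[U_i \mid Z_i = k] = \prod_{j=1}^{m} P[\mathcal{A}_j(v_i) \mid v_i \in B_k]$ is independent of $D$. This gives $P[U_i \mid D] = \sum_{k=1}^{l} d_k \prod_{j=1}^{m} P[\mathcal{A}_j(v_i) \mid v_i \in B_k]$, and by independence across users, $L(D) = \sum_{i=1}^{n} \ln P[U_i \mid D]$.

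Next, I would write the auxiliary (expected complete-data log-likelihood) function $Q(D \mid D^{(t)}) = \sum_{i=1}^{n}\sum_{k=1}^{l} \gamma_{i,k}^{(t)} \ln\bigl(d_k \prod_{j=1}^{m} P[\mathcal{A}_j(v_i) \mid v_i \in B_k]\bigr)$ with E-step responsibilities $\gamma_{i,k}^{(t)} = P[Z_i = k \mid U_i, D^{(t)}]$, and derive the closed-form M-step update $d_k^{(t+1)} = \tfrac{1}{n}\sum_{i=1}^{n} \gamma_{i,k}^{(t)}$ via a Lagrange multiplier enforcing $\sum_k d_k = 1$. Monotonicity of the observed log-likelihood then follows from the standard decomposition $L(D) = Q(D \mid D^{(t)}) + H(D \mid D^{(t)})$ with $H(D \mid D^{(t)}) = -\sum_{i,k} \gamma_{i,k}^{(t)} \ln P[Z_i = k \mid U_i, D]$; Gibbs' inequality yields $H(D \mid D^{(t)}) \geq H(D^{(t)} \mid D^{(t)})$, and since the M-step maximizes $Q$ in its first argument, we obtain $L(D^{(t+1)}) \geq L(D^{(t)})$.

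To close the convergence argument, I would note that $L(D)$ is bounded above (each $P[U_i \mid D] \leq 1$), so the monotone sequence $\{L(D^{(t)})\}$ converges. Standard EM convergence theory (Wu, 1983) then implies that any limit point of $\{D^{(t)}\}$ on the probability simplex $\{D : d_k \geq 0,\ \sum_k d_k = 1\}$ is a stationary point of $L$. The final step is to upgrade \emph{stationary point} to \emph{maximum-likelihood estimator}: since each $P[U_i \mid D]$ is linear in $D$, $\ln P[U_i \mid D]$ is concave in $D$, and thus $L(D)$ is concave on the simplex; every stationary point of a concave function on a convex set is a global maximizer, which is exactly the MLE.

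The main obstacle I anticipate is this last step. Concavity alone does not preclude the iterates from stalling on the boundary of the simplex when some $d_k^{(t)}$ collapses to zero and is then frozen by the multiplicative form of the M-step update, so ``stationary point of the unconstrained problem'' and ``global maximizer on the simplex'' need not coincide without care. I would address this by restricting attention to interior initializations (e.g., uniform $d_k^{(0)} = 1/l$) and invoking the textbook result that EM for multinomial latent models of exponential-family type, of which this is an instance, enjoys global convergence of the likelihood sequence to the MLE. The crux of the argument is thus the pairing of per-iteration monotonicity of the EM updates with concavity of $L(D)$ in $D$.
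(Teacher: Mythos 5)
Your proposal rests on the same key observation as the paper's own proof: because $P[\mathcal{A}_j(v_i)\mid v_i\in B_k]$ is a constant independent of $D$, each $P[U_i\mid D]$ is linear in $D$, hence $L(D)$ is concave and a stationary point of the EM iteration is the global maximizer, i.e., the MLE. The paper's proof consists of essentially only this observation, so your fuller treatment --- the explicit latent indicators, the $Q$/$H$ monotonicity decomposition, boundedness of $L$, the appeal to Wu's theorem, and especially the caveat that the multiplicative M-step can freeze coordinates on the boundary of the simplex (which you handle via an interior initialization, matching the algorithm's uniform $\hat{d}_t=1/l$) --- supplies rigor that the paper leaves implicit rather than taking a different route.
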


\begin{proof}
	Equation \ref{F2} can be written as:
	\begin{equation}
		\setlength{\abovedisplayskip}{0pt} 
		\setlength{\belowdisplayskip}{0pt} 
		\label{Object function3}
		\begin{split}
			L(D)=\sum_{i=1}^{n}ln(\sum_{k=1}^{l}d_k P[\mathcal{A}_1(v_i),\dots,\mathcal{A}_m(v_i)|v_i\in B_k]).
		\end{split}
	\end{equation}
	
	Since the perturbations of different mechanisms are independent of each other, we have
	\begin{equation}
		\setlength{\abovedisplayskip}{0pt} 
		\setlength{\belowdisplayskip}{0pt} 
		\label{Object function4}
		\begin{split}
			& P[\mathcal{A}_1(v_i),\dots,\mathcal{A}_j(v_i),\dots,\mathcal{A}_m(v_i)|v_i\in B_k]=\\&P[\mathcal{A}_1(v_i)|v_i\in B_k]\dots P[\mathcal{A}_j(v_i)|v_i\in B_k]\dots P[\mathcal{A}_m(v_i)|v_i\in B_k],
		\end{split}
	\end{equation}
	where $P[\mathcal{A}_j(v_i)|v_i\in B_k]$ is a constant determined by perturbation mechanism, and thus $L(D)$ is a concave function. So the EM algorithm converges to the MLE of the original distribution $D$.
\end{proof}

To compute $P[\mathcal{A}_j(v_i) | v_i \in B_k]$, we also discretize the output domain of $\mathcal{A}_j$ into $h_j$ uniform buckets. For $\mathcal{A}_j$, its output domain is discretized into buckets $H_j[1], H_j[2], \dots, H_j[r], \dots, H_j[h_j]$. We need to determine which bucket $\mathcal{A}_j(v_i)$ belongs to and then calculate its probability. For instance, consider $P[\mathcal{A}_j(v)\in H_j[r]|v\in B_k]$, which represents the probability that the perturbed value $\mathcal{A}_j(v)$ falls into bucket $H_j[r]$, given that the original value $v$ falls into bucket $B_k$.

In Equation \ref{Object function4}, for $v_i$, the perturbation results will fall into different buckets. To simplify the representation and facilitate the computation, we can represent the combination of these buckets as a bucket vector. Let's define the $r$-th bucket vector $W_r$ as a set containing $m$ elements: $W_r = \{W_r[1], W_r[2], \dots, W_r[m]\}$. Each element $W_r[j]$ represents one of the discretized buckets $H_j[1], \dots, H_j[r], \dots, $\\$ H_j[h_j]$ from the output domain of $\mathcal{A}_j$. When we say $\{\mathcal{A}_1(v),\dots, $\\$\mathcal{A}_m(v)\} \in W_r$, it means $\mathcal{A}_1(v) \in W_r[1], \mathcal{A}_2(v) \in W_r[2], \dots, \mathcal{A}m(v) $\\$\in W_r[m]$. Consequently, we can reformulate Equation \ref{Object function4} as follows:
\begin{equation}
	\setlength{\abovedisplayskip}{0pt} 
	\setlength{\belowdisplayskip}{0pt} 
	\label{Object function6}
	\begin{split}
		L(F)=\sum_{i=1}^{n}ln(\sum_{k=1}^{l}d_k P[\{\mathcal{A}_1(v_i),\dots,\mathcal{A}_m(v_i)\}\in W_r|v_i\in B_k]).
	\end{split}
\end{equation}
The number of possible bucket vectors, denoted as $l'$, is given by $l' = \min\{n, \prod_{j=1}^{m} h_j\}$. Then, we employ the EM algorithm to determine the MLE of Equation \ref{Object function6}, following the methodology described by Li et al. \cite{li2020estimating}. \textbf{E step} computes the conditional probability distribution of the latent data $P_t$ given the current estimate of the model parameters:
\begin{equation}
	\begin{split}
		\setlength{\abovedisplayskip}{0pt} 
		\setlength{\belowdisplayskip}{0pt} 
		&P_t = \hat{d}_t \sum_{k=1}^{l'} n_k \frac{P [\{ \mathcal{A}_1(v),\dots, \mathcal{A}_m(v)\}\in W_k | \mathbf{v} \in {B}_t, {\hat{D}}]}{P [\{ \mathcal{A}_1(v),\dots, \mathcal{A}_m(v)\}\in W_k | {\hat{D}}]}=
		\\& \hat{d}_t  \! \sum_{k=1}^{l'}  \! \frac{n_k P [\mathcal{A}_1(v) \in W_k[1] | \mathbf{v} \in {B}_t]  \! \dots  \! P [\mathcal{A}_m(v) \in W_k[m] | \mathbf{v} \in {B}_t]}{ \!\sum_{k=1}^l \! {P [\mathcal{A}_1(v)  \! \in \! W_k[1] | \mathbf{v}  \!\in \! {B}_t]  \!\dots \! P [\mathcal{A}_m(v)  \!\in \! W_k[m] | \mathbf{v}  \!\in \! {B}_t]} \hat{d}_k},
	\end{split}
	\label{estep}
\end{equation}
where $n_k$ denotes the count of perturbed values corresponding to the bucket vector $W_k$.

\textbf{M step} calculates parameters $\hat{d}_t$($t\in\{1,\dots,l\}$) that maximize the expected log-likelihood obtained in the E step:
\begin{equation}
	\setlength{\abovedisplayskip}{0pt} 
	\setlength{\belowdisplayskip}{0pt} 
	\hat{d}_t = \frac{P_t}{\sum_{k=1}^l P_k}.
	\label{mstep}
\end{equation}

Building upon this approach, we propose the ULE algorithm.
\noindent\textbf{The procedure of ULE.}
The ULE algorithm reconstructs the distribution as illustrated in Algorithm \ref{alg:post-proc-em}. The algorithm begins by assigning non-zero initial values to $\hat{D}$, ensuring $\sum_{t=1}^l \hat{d_t}=1$ (line 1). It then executes the EM algorithm, alternating between the Expectation (E) and Maximization (M) steps (lines 3-4). In the E-step, the algorithm evaluates the log-likelihood expectation using the current estimate $\hat{D}$ and the observed counts $n_k$, where $n_k$ denotes the count of perturbed values corresponding to the bucket vector $W_k$. The M-step calculates a new $\hat{D}$ that maximizes the expected likelihood, which serves as input for the next E-step (line 4). The algorithm terminates when the results converge and returns the estimated frequency histogram $\hat{D}$ once the convergence condition is met (line 5).
\begin{algorithm}
	\caption{Procedure for ULE}
	\label{alg:post-proc-em}
	\begin{flushleft}
		{\bf Input:}
		Perturbation mechanisms $\{\mathcal{A}_1,\mathcal{A}_2,...,\mathcal{A}_m\}$, corresponding privacy budget $\{\epsilon_1,\epsilon_2,...,\epsilon_m\}$, perturbed result sets from $n$ users $\{U_1, U_2,..., U_n\}$, number of discrete buckets for the input domain $l$.
		\\
		{\bf Output:}
		The aggregated distribution $\hat{D}$
	\end{flushleft}
	\begin{algorithmic}[1]
		\\
		$\hat{d_t}=\frac{1}{l}, t\in\{1,\dots,l\}$\\
		\While{not converge}
		{\textbf{E-step:}\\
			\For{ $t \in \{1, \dots, l\}$}
			{ 
				Calculate $P_t$ according to Equation \ref{estep}
			}
			\textbf{M-step:}\\	
			\For{$t \in \{1, \dots, l\}$}
			{ 	Calculate $\hat{d}_t$ according to Equation \ref{mstep}}}
		\State \textbf{Return} $\hat{D}=\{\hat{d}_1,\dots,\hat{d}_l\}$
	\end{algorithmic}
\end{algorithm}
\vspace{-0.1in}
To summarize, the performance advantage of ULE over any single service primarily depends on the informativeness of data from other services, which is determined by four key factors: the number of services, privacy budget settings, mechanism types, and user base size.

\section{Experimental Results}
\label{exp}
\subsection{Experimental Setup}
We conduct experiments on a workstation equipped with an AMD Ryzen 7 2700X eight-core processor, 64 GB of RAM, an NVIDIA GeForce RTX 3090 GPU, and Windows 10 operating system. All experiments are implemented in Python 3.12.0. The experimental code and datasets are available online\footnote{https://github.com/RONGDUGithub/Distribution\_ULE\_1}. 

\subsubsection{Experiment Design} 
Each of $m$ services may employ the same or different perturbation mechanisms for estimating statistical results. We will demonstrate that our proposed methods achieve better accuracy than any individual service, while preventing further privacy leakage of user data. For mean estimation, we compare our proposed methods UA and UWA against existing mechanisms such as Laplace, SW, PM, and SR. For distribution estimation, we compare our proposed method, ULE, against SW, PM, and SR mechanisms. We exclude the Laplace mechanism from our distribution estimation framework due to its limitations in distribution estimation tasks. When discretizing the continuous perturbed data, truncation to a feasible range is necessary, leading to significant information loss that distorts the distribution estimate. In contrast, the SR, PM, and SW mechanisms can be discretized without truncation, avoiding additional information loss. The stopping criterion for EM in our paper is when $|L(\hat{F}^{(t+1)}) - L(\hat{F}^{(t)})| < \epsilon$, which is the same as setting in paper~\cite{li2020estimating}.

\subsubsection{Datasets} 
Our analysis employs four datasets --- two synthetic and two real-world numerical sets. The first synthetic dataset, $\textbf{Beta(2,5)}$, is drawn from the Beta distribution~\cite{johnson1995continuous} where the shape parameters $\alpha = 2$ and $\beta = 5$ determine the skewness and concentration of the distribution. $\textbf{Beta+Sin}$ is constructed by superimposing sinusoidal noise onto the $\textbf{Beta(2,5)}$ distribution to introduce periodic fluctuations in the probability density. Each synthetic dataset comprises 1,000,000 samples uniformly distributed within the interval $[0,1]$. For the real-world data analysis, we utilize the $\textbf{Taxi}$ dataset~\cite{taxi2018}, which consists of pick-up times extracted from the New York Taxi data for January 2018. This dataset encompasses 1,048,575 integer values, each representing the number of seconds within 24 hours (range: 0 to 86,340). The second real-world dataset, $\textbf{Retirement}$~\cite{retirement2013}, is sourced from the San Francisco city employee retirement plans, providing comprehensive data on salary and benefits disbursements since the fiscal year 2013. Our analysis specifically focuses on the total compensation data, incorporating a subset of 606,507 entries within the range of 10,000 to 60,000.

\subsubsection{Utility Metrics} \hfill

\noindent\textbf{MSE.} We use Mean Squared Error ($\text{MSE}$)~\cite{lehmann2006theory} to evaluate the accuracy of mean estimation. MSE is a commonly used metric to measure the average squared difference between predicted values $\hat{y}_i$ and the ground truth $y_i$ in regression tasks. Formally, 
\begin{equation*}
	\setlength{\abovedisplayskip}{0pt} 
	\setlength{\belowdisplayskip}{0pt} 
	MSE = \frac{1}{t} \sum_{i=1}^{t} (\hat{y}_i - y_i)^2.
\end{equation*}
where $t$ is the number of observations and we set $t=50$ (i.e., $50$ trials) in our experiments. 

\noindent\textbf{Jensen-Shannon divergence.}
The Jensen-Shannon divergence (JS divergence) \cite{dagan1997similarity}, denoted as $d_{\text{JS}}$, is to measure the similarity between two probability distributions. 
Given two probability distributions $P$ and $Q$, the Jensen-Shannon divergence is calculated as:
{\setlength{\abovedisplayskip}{0pt}
	\setlength{\belowdisplayskip}{0pt}\begin{equation*}
		\setlength{\abovedisplayskip}{0pt} 
		\setlength{\belowdisplayskip}{0pt} 
		d_{\text{JS}}(P, Q) = \sqrt{\frac{1}{2} D_{\text{KL}}(P \parallel M) + \frac{1}{2} D_{\text{KL}}(Q \parallel M)},
\end{equation*}}
where $M = \frac{1}{2}(P + Q)$ is the average distribution, and $D_{\text{KL}}$ represents the Kullback-Leibler divergence \cite{manning1999foundations} defined as:
\begin{equation*}
	\setlength{\abovedisplayskip}{0pt} 
	\setlength{\belowdisplayskip}{0pt} 
	D_{\text{KL}}(P \parallel Q) = \sum_{i} P(i) \log. \frac{P(i)}{Q(i)}
\end{equation*}

In our experiments, we employ the JS divergence to evaluate the performance of the distribution estimation. 
A smaller $d_{\text{JS}}$ indicates better performance, with a value of 0 representing identical distributions.

\subsection{Multiple Services Statistics}
In this set of experiments, we evaluate mean estimation across four heterogeneous services, each using a different mechanism (SR, Laplace, PM, and SW), with UA and UWA utilizing these services' information. For distribution estimation, we evaluate three services using three different mechanisms (SR, PM, and SW), with ULE utilizing their information. We compare the statistical accuracy of aggregated results against individual service estimates throughout all experiments.

\begin{figure*}[th]
	\centering
	{
		\begin{minipage}{6cm}
			\centering
			\includegraphics[scale=0.75]{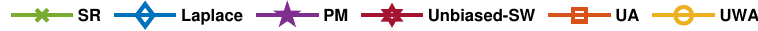}
		\end{minipage}
	}
	\hspace{1.6in}
	{
		\begin{minipage}{6cm}
			\centering
			\includegraphics[scale=0.75]{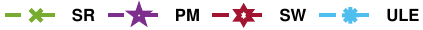}
		\end{minipage}
	}
	\\
	\vspace{-0.1in}
	\centering
	\subfigure[\textbf{Beta25}, mean.]{
		\begin{minipage}[t]{0.24\linewidth}
			\centering
			\includegraphics[width=1\textwidth]{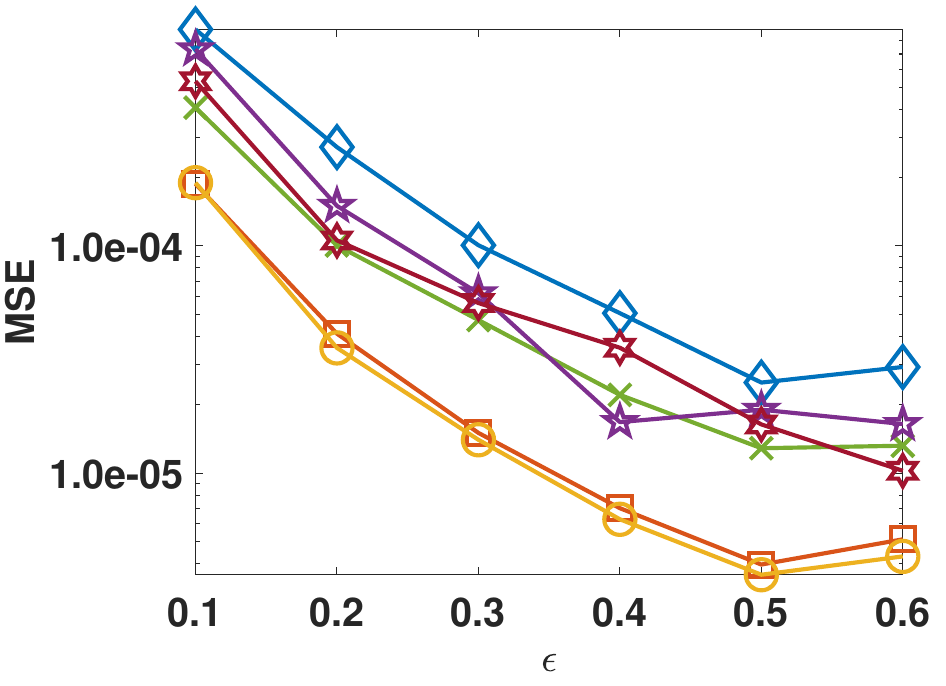}
		\end{minipage}%
	}%
	\subfigure[\textbf{Beta+Sin}, mean.]{
		\begin{minipage}[t]{0.24\linewidth}
			\centering
			\includegraphics[width=1\textwidth]{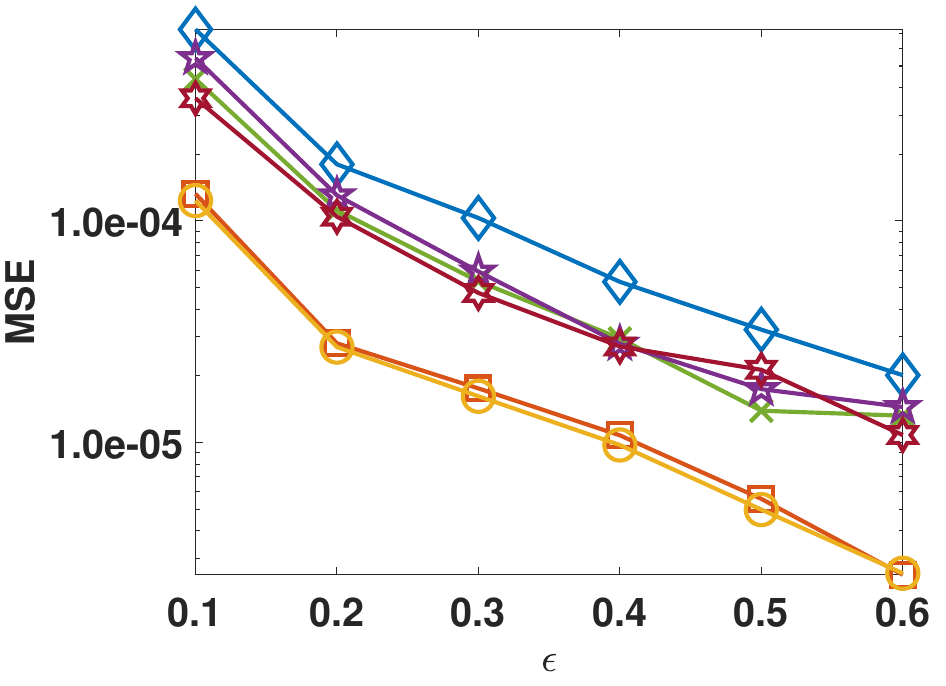}
		\end{minipage}%
	}%
	\subfigure[\textbf{Taxi, mean}.]{
		\begin{minipage}[t]{0.24\linewidth}
			\centering
			\includegraphics[width=1\textwidth]{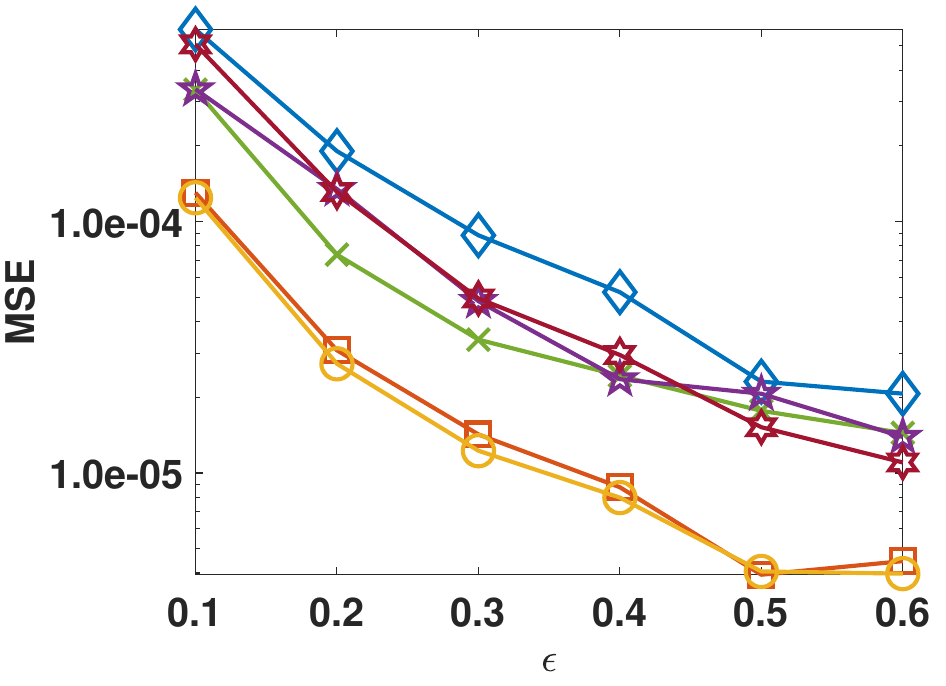}
		\end{minipage}
	}%
	\subfigure[\textbf{Retirement} , mean.]{
		\begin{minipage}[t]{0.24\linewidth}
			\centering
			\includegraphics[width=1\textwidth]{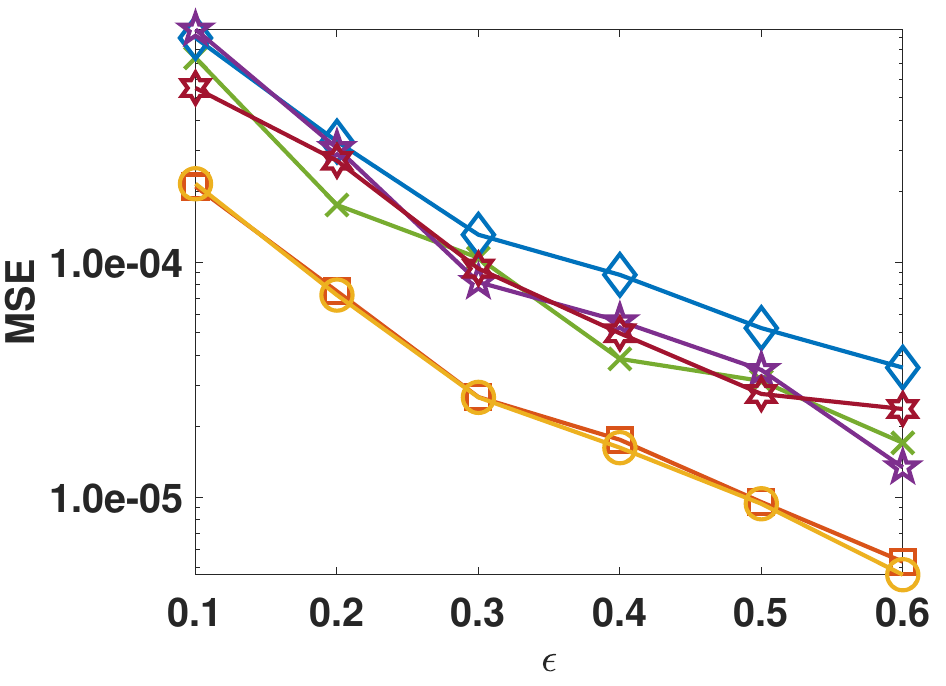}
		\end{minipage}
	}%
	\\
	\vspace{-0.15in}
	\centering
	\subfigure[\textbf{Beta25}, distribution.]{
		\begin{minipage}[t]{0.24\linewidth}
			\centering
			\includegraphics[width=1\textwidth]{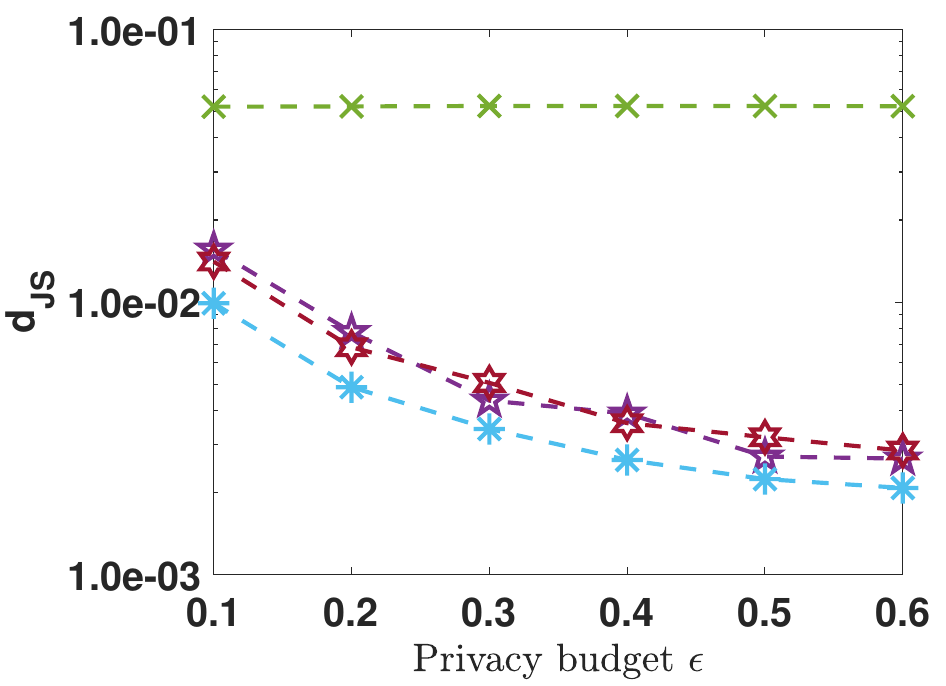}
		\end{minipage}%
	}%
	\subfigure[\textbf{Beta+Sin}, distribution.]{
		\begin{minipage}[t]{0.24\linewidth}
			\centering
			\includegraphics[width=1\textwidth]{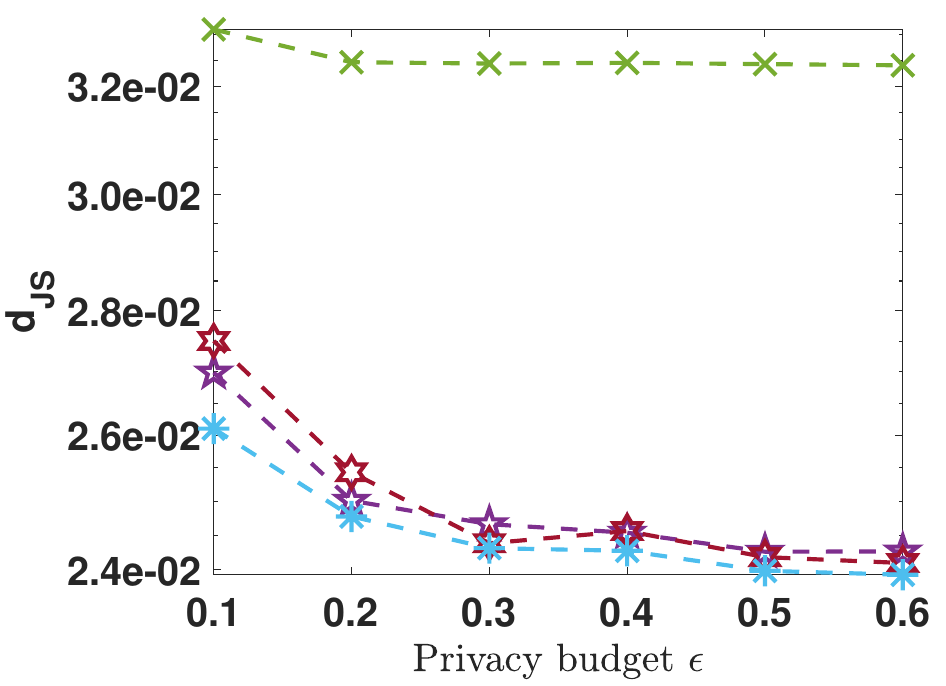}
		\end{minipage}%
	}%
	\subfigure[\textbf{Taxi}, distribution.]{
		\begin{minipage}[t]{0.24\linewidth}
			\centering
			\includegraphics[width=1\textwidth]{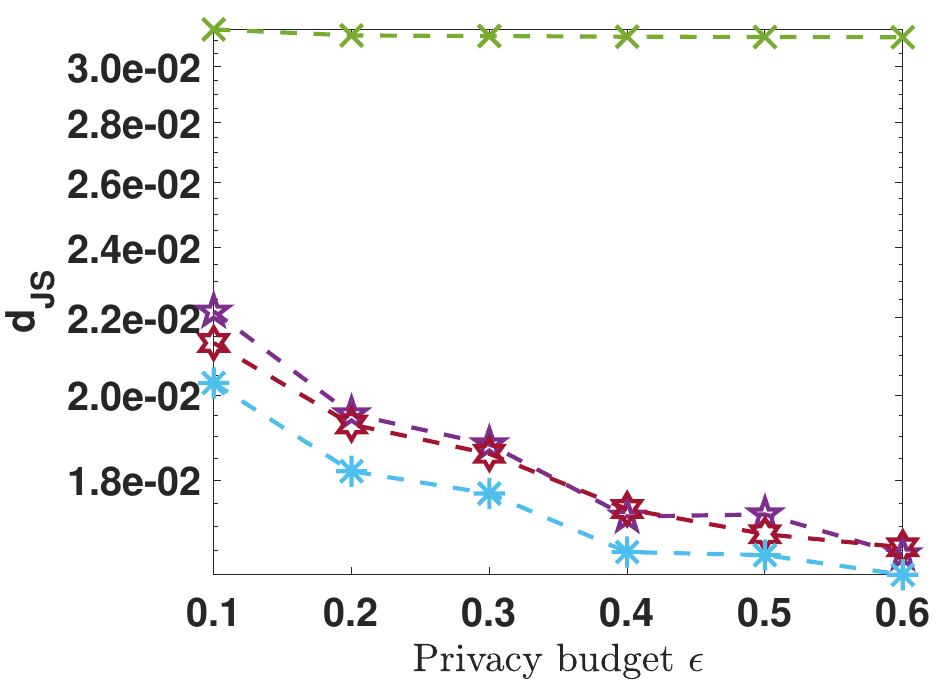}
		\end{minipage}
	}%
	\subfigure[\textbf{Retirement}, distribution.]{
		\begin{minipage}[t]{0.24\linewidth}
			\centering
			\includegraphics[width=1\textwidth]{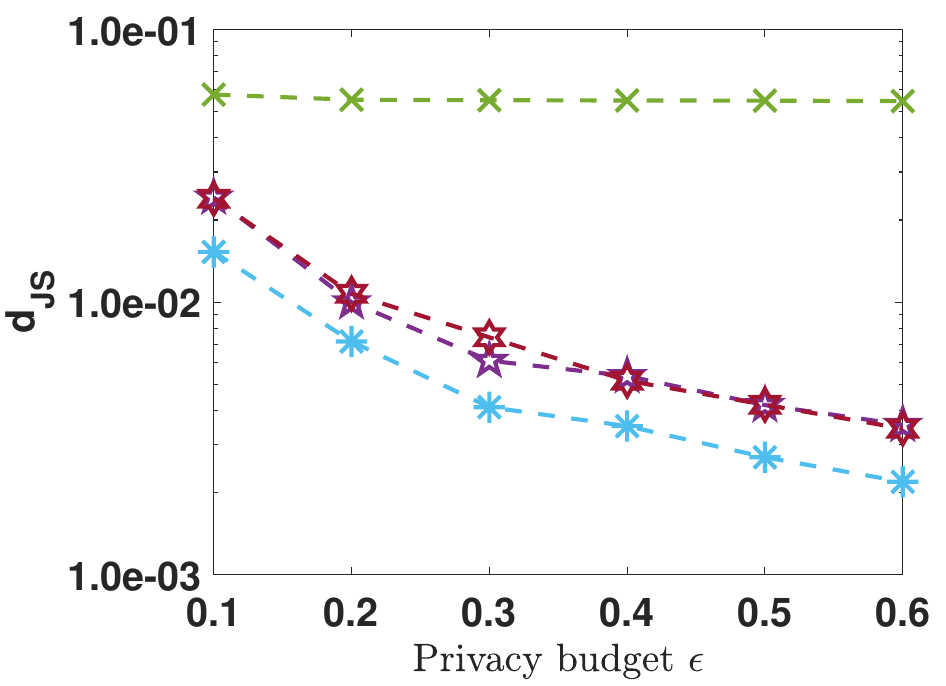}
		\end{minipage}
	}%
	\vspace{-0.05in}
	\caption{Comparison of MSE and JS divergence across different privacy budgets ($\epsilon$): mean estimation (a)-(d) and distribution estimation (e)-(h)}
	\label{Exp1}
	\vspace{-0.15in}
\end{figure*}

\noindent\textbf{Results with Same Privacy Budgets.} Figures \ref{Exp1}(a)-(d) show the MSE of mean estimation results with privacy budget $\epsilon$ increasing from 0.1 to 0.6. As $\epsilon$ increases, the MSE of all methods decreases. Our proposed methods, UA and UWA, demonstrate significant performance improvements of 53.3\%-85.0\% compared to the best results from any single mechanism. Furthermore, UWA consistently outperforms UA in most scenarios, with improvements typically ranging from 1\% to 15\% across different privacy budgets.

Figures \ref{Exp1}(e)-(h) present the distribution estimation results with varying $\epsilon$. The JS divergence of all methods decreases correspondingly with increasing $\epsilon$. SR exhibits the worst performance due to significant information loss from discretizing. PM and SW mechanisms achieve better performance than SR through their ability to preserve the original distribution shape, maintaining high probability density around the true values. By leveraging information from all available mechanisms, our proposed ULE consistently achieves superior performance across various datasets compared to any single service, demonstrating improvements of 21.1\%-81.0\% over SR, 3.3\%-36.9\% over PM, and 4.8\%-36.10\% over SW.

\noindent\textbf{Results with Different Privacy Budgets.} 
In Figures \ref{Exp2}(a)-(d), we investigate the impact of varying privacy budgets across mechanisms using four configurations (A1-A4) on mean estimation. The privacy budgets [SR, Laplace, PM, SW] are set to A1[0.1, 0.2, 0.3, 0.4], A2[0.2, 0.4, 0.1, 0.3], A3[0.3, 0.1, 0.4, 0.2], and A4[0.4, 0.3, 0.2, 0.1]. Each mechanism's performance directly correlates with its privacy budget, as demonstrated by the superior performance of mechanisms with a 0.4 budget (PM in A3 and SR in A4). UA's uniform weighting scheme gives excessive influence to highly perturbed results, leading to performance worse than some single mechanisms. In contrast, UWA leverages Bayesian approaches to determine appropriate weights, thus consistently achieving superior performance, surpassing the best single mechanism by 11.51\% to 72.41\%, with an average improvement exceeding 40\%.

Figures \ref{Exp2}(e)-(h) present the distribution estimation results under four privacy budget configurations [SR, PM, SW]: C1[0.1, 0.2, 0.3], C2[0.2, 0.3, 0.1], C3[0.3, 0.1, 0.2], and C4[0.2, 0.1, 0.3]. SR consistently exhibits a higher MSE regardless of its budget allocation. Between PM and SW, the mechanism with budget 0.3 achieves better results, as shown by SW in C1/C4 and PM in C2. Our proposed ULE consistently outperforms existing individual mechanisms across all four datasets, achieving improvements of 23.65\%-92.56\% over SR, 0.15\%-73.42\% over PM, and 0.02\%-73.32\% over SW.

ULE achieves only modest improvements over SW and PM in configurations C1/C4 and C2 respectively. This limited enhancement is primarily due to two factors: the high estimation accuracy already achieved by SW (C1/C4) and PM (C2) with maximum privacy budgets, and the constrained information available from other services due to smaller budgets or mechanism limitations. The SR mechanism, in particular, provides minimal contribution owing to significant information loss during perturbation.

\begin{figure*}[th]
	\hspace{0.2in}
	{
		\begin{minipage}{6cm}
			\centering
			\includegraphics[scale=0.75]{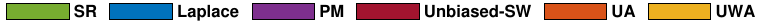}
		\end{minipage}
	}
	\hspace{1.6in}
	{
		\begin{minipage}{6cm}
			\centering
			\includegraphics[scale=0.75]{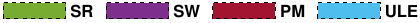}
		\end{minipage}
	}
	\\
	\vspace{-0.05in}
		\centering
		\subfigure[\textbf{Beta25}, mean.]{
			\begin{minipage}[t]{0.24\linewidth}
				\centering
				\includegraphics[width=1\textwidth]{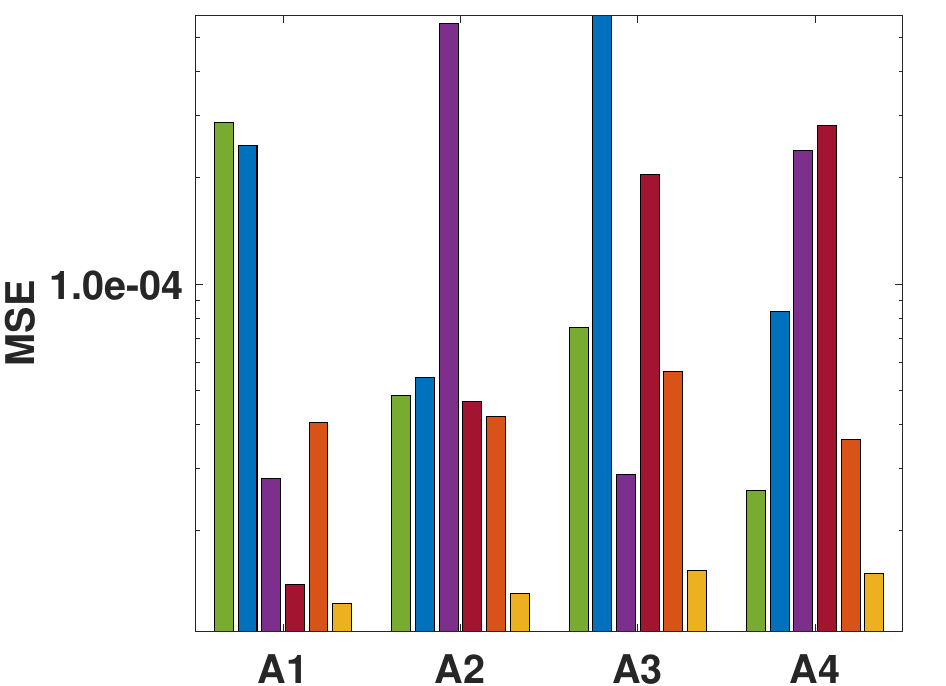}
			\end{minipage}%
		}%
		\subfigure[\textbf{Beta+Sin}, mean.]{
			\begin{minipage}[t]{0.24\linewidth}
				\centering
				\includegraphics[width=1\textwidth]{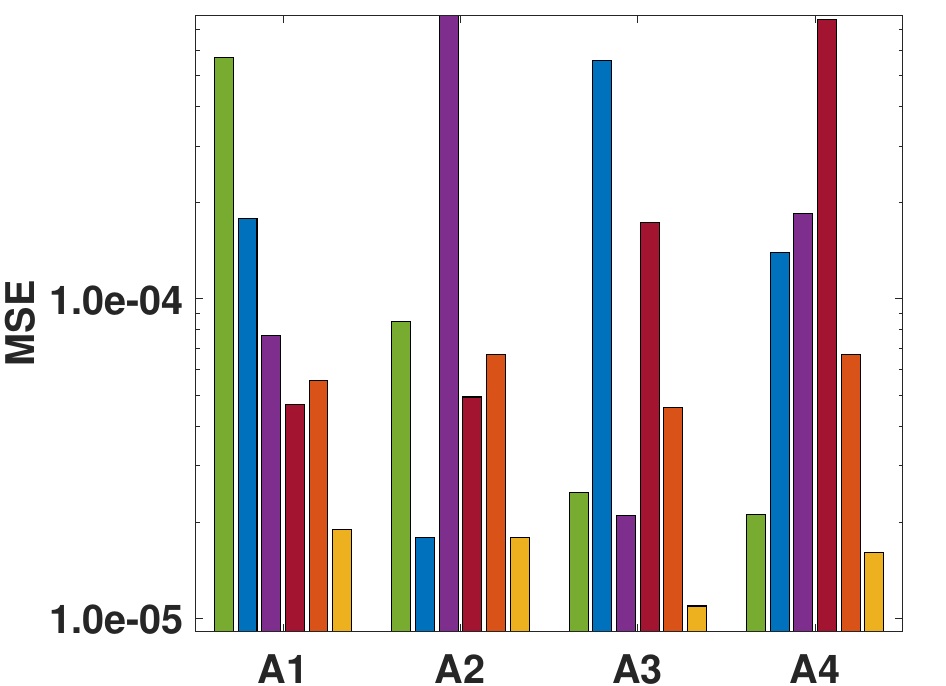}
			\end{minipage}%
		}%
		\subfigure[\textbf{Taxi}, mean.]{
			\begin{minipage}[t]{0.24\linewidth}
				\centering
				\includegraphics[width=1\textwidth]{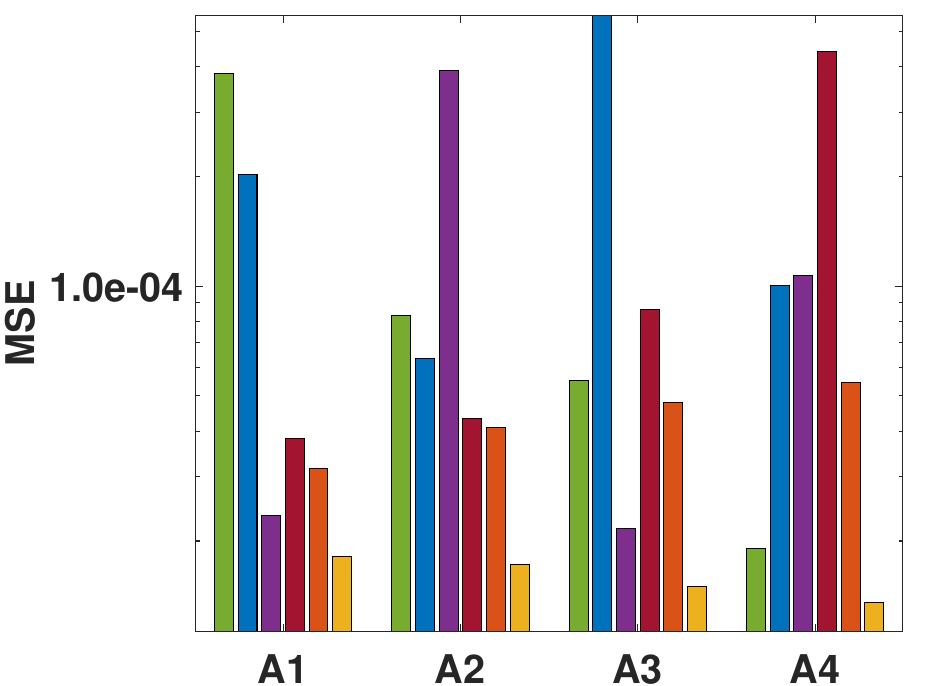}
			\end{minipage}
		}%
		\subfigure[\textbf{Retirement}, mean.]{
			\begin{minipage}[t]{0.24\linewidth}
				\centering
				\includegraphics[width=1\textwidth]{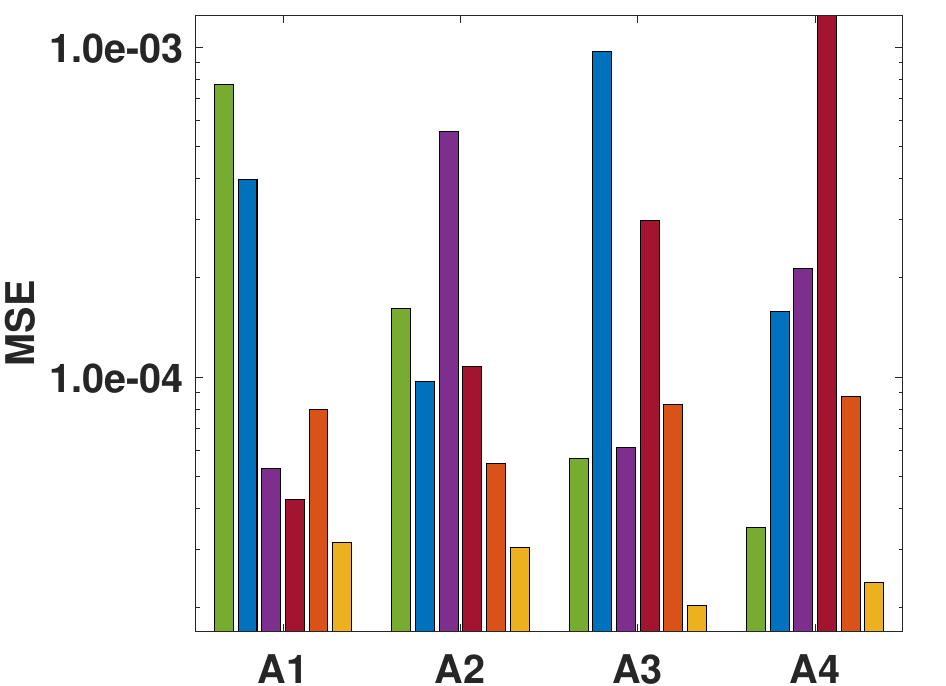}
			\end{minipage}
		}%
		\vspace{-0.15in}
		\\
		\centering
		\subfigure[\textbf{Beta25}, distribution.]{
			\begin{minipage}[t]{0.24\linewidth}
				\centering
				\includegraphics[width=1\textwidth]{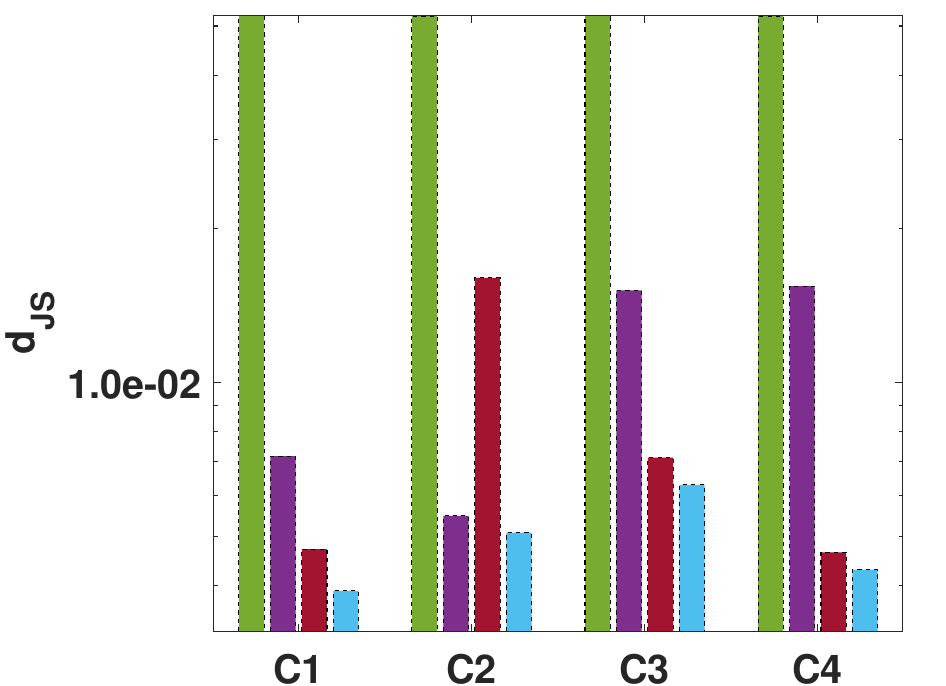}
			\end{minipage}%
		}%
		\subfigure[\textbf{Beta+Sin}, distribution.]{
			\begin{minipage}[t]{0.24\linewidth}
				\centering
				\includegraphics[width=1\textwidth]{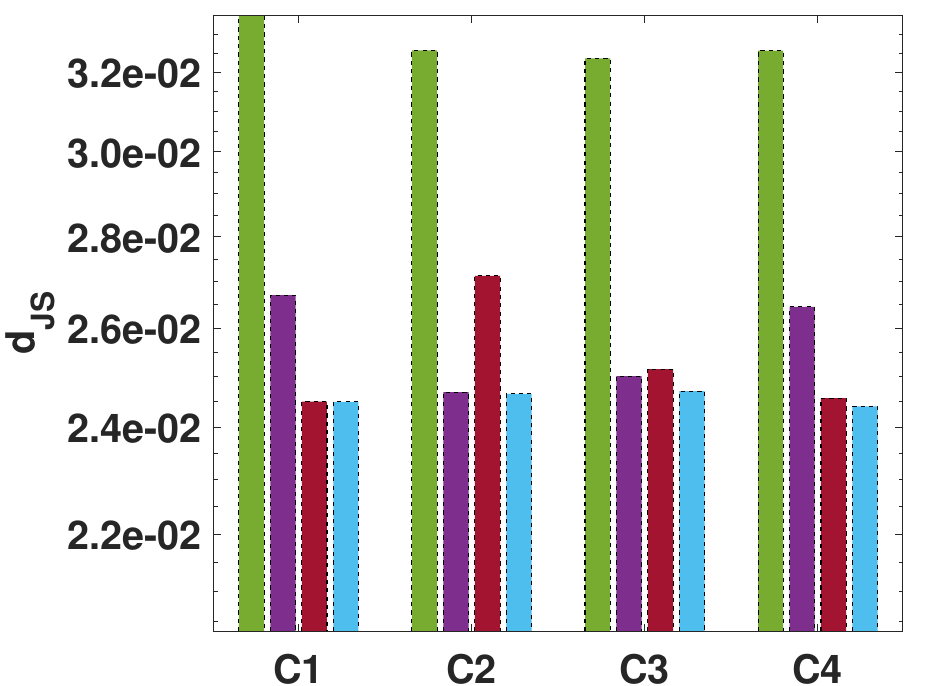}
			\end{minipage}%
		}%
		\subfigure[\textbf{Taxi}, distribution.]{
			\begin{minipage}[t]{0.24\linewidth}
				\centering
				\includegraphics[width=1\textwidth]{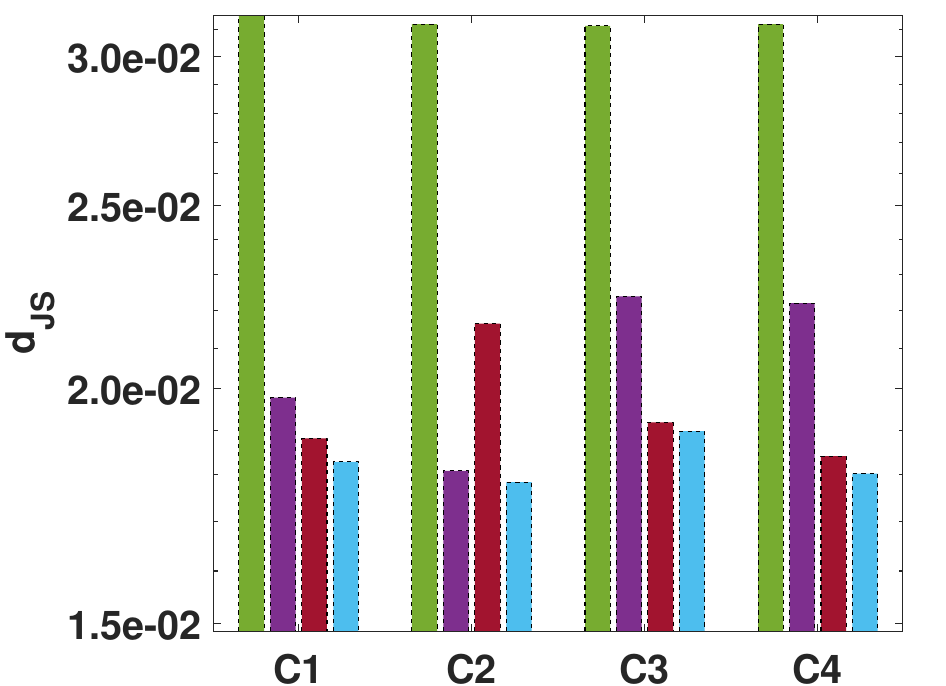}
			\end{minipage}
		}%
		\subfigure[\textbf{Retirement}, distribution.]{
			\begin{minipage}[t]{0.24\linewidth}
				\centering
				\includegraphics[width=1\textwidth]{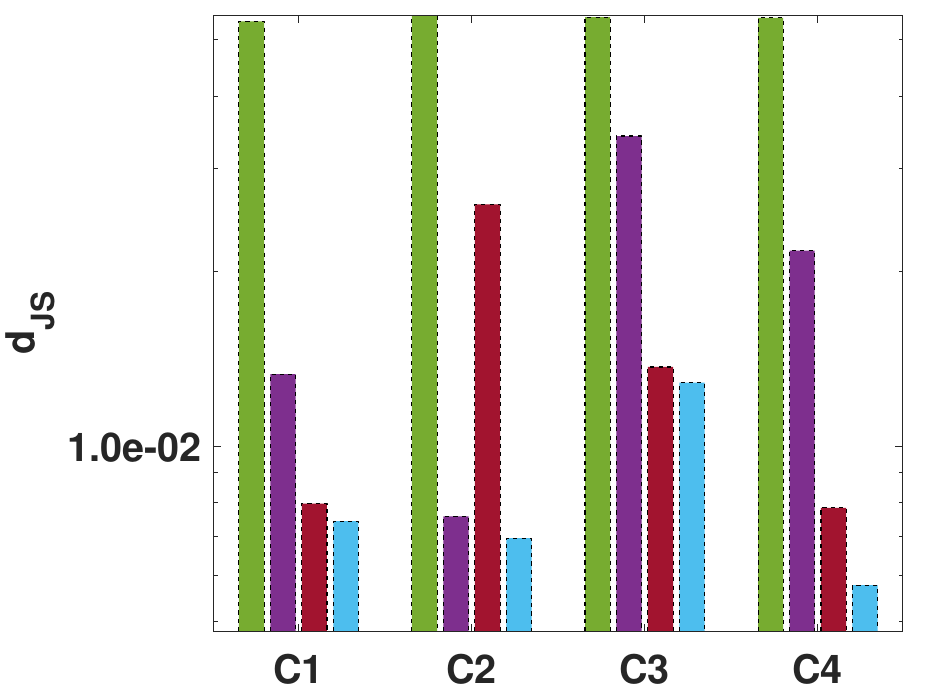}
			\end{minipage}
		}%
		\caption{Comparison of MSE and JS divergence across different $\epsilon$ configurations: mean estimation (a)-(d) and distribution estimation (e)-(h)}
		\label{Exp2}
		\vspace{-0.1in}
	\end{figure*}

	\subsection{Robustness Verification}
	To evaluate the robustness of our methods, we conduct experiments under four distinct scenarios: (1) implementation with a single perturbation mechanism for all services, (2) application to high-dimensional datasets, (3) performance with varying numbers of services, and (4) scalability across different user base sizes.
	
	\noindent\textbf{Impact of Same Mechanism.}
	Figures \ref{Exp3}(a)-(b) evaluate the scenario where four services employ identical mechanisms with privacy budgets ranging from $0.1$ to $0.4$. For mean estimation in Figure \ref{Exp3}(a), while all mechanisms improve with increasing $\epsilon$, UA performs worse than the single service with privacy budget $\epsilon=0.4$. This performance degradation stems from its simple averaging strategy across different privacy budgets, making it susceptible to services with poor performance. In contrast, UWA maintains the lowest MSE across all settings, demonstrating superior estimation accuracy regardless of mechanism type. Figure \ref{Exp3}(b) displays the JS divergence for distribution estimation, where three services utilize the identical mechanism with privacy budgets ranging from $0.1$ to $0.3$. ULE consistently achieves the lowest JS divergence across all settings. 
	
	\noindent\textbf{Applicability to High-Dimensional Data.}
	To demonstrate the effectiveness of our methods on high-dimensional data, we utilize the \textbf{Taobao}\footnote{https://tianchi.aliyun.com/dataset/56}, which comprises advertising click records from 1,172,556 users over three days. Each record captures the category of the last click in 10-minute intervals. After normalizing the 10,401 category values to [-1, 1] and extracting 60-timestamp sequences, we evaluate our methods' performance with dimensionality varying from 10 to 60, as shown in Figures \ref{Exp3}(c)-(d). To address the high dimensionality challenge, we adopt sampling strategy~\cite{erlingsson2014rappor} that partitions users instead of splitting the privacy budget across dimensions. Although performance degrades with increasing dimensionality due to fewer users per dimension, our UA and UWA methods maintain superior performance in mean estimation compared to individual services (Figure \ref{Exp3}(c)), with ULE showing better distribution estimation performance (Figure \ref{Exp3}(d)).

	\begin{figure*}[th]
		\hspace{-0.9in}
		{
			\begin{minipage}{6cm}
				\centering
				\includegraphics[scale=0.7]{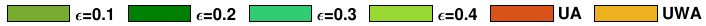}
			\end{minipage}
		}
		\hspace{1.0in}
		{
			\begin{minipage}{6cm}
				\centering
				\includegraphics[scale=0.7]{legend_G1_mean.pdf}
			\end{minipage}
		}
		\\
		\hspace{0.2in}
		\centering
		{
			\begin{minipage}{6cm}
				\centering
				\includegraphics[scale=0.7]{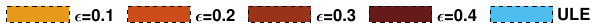}
			\end{minipage}
		}
		\hspace{1.6in}
		{
			\begin{minipage}{6cm}
				\centering
				\includegraphics[scale=0.7]{legend_G1_distribution.pdf}
			\end{minipage}
		}\\
		\vspace{-0.05in}
		\centering
		\subfigure[\textbf{Beta+Sin}, $\epsilon$=0.1, mean.]{
			\begin{minipage}[t]{0.24\linewidth}
				\centering
				\includegraphics[width=1\textwidth]{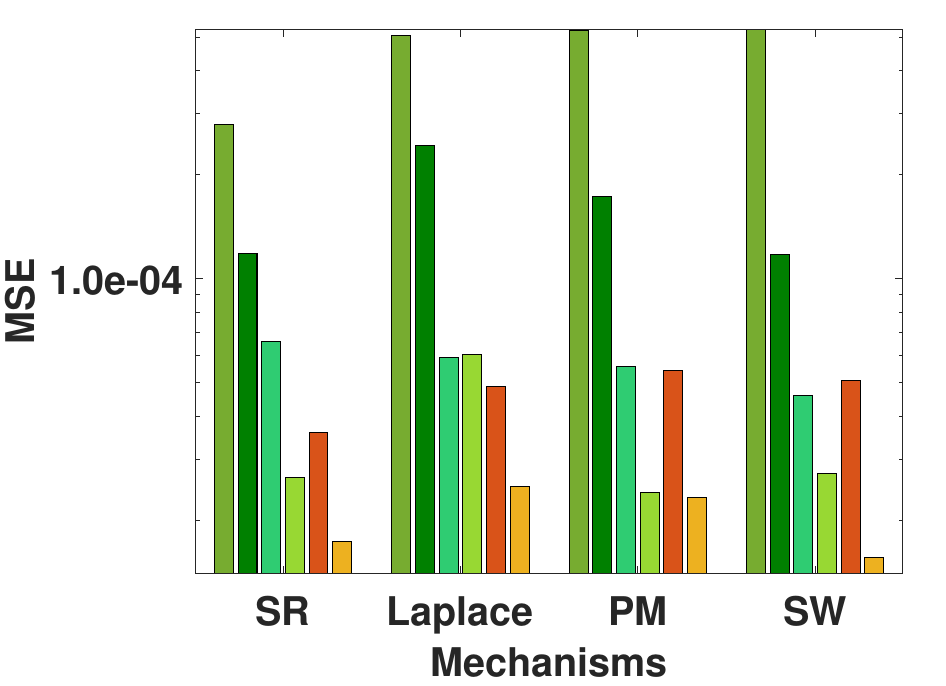}
			\end{minipage}%
		}%
		\subfigure[\textbf{Beta+Sin}, $\epsilon$=0.1, distribution.]{
			\begin{minipage}[t]{0.24\linewidth}
				\centering
				\includegraphics[width=1\textwidth]{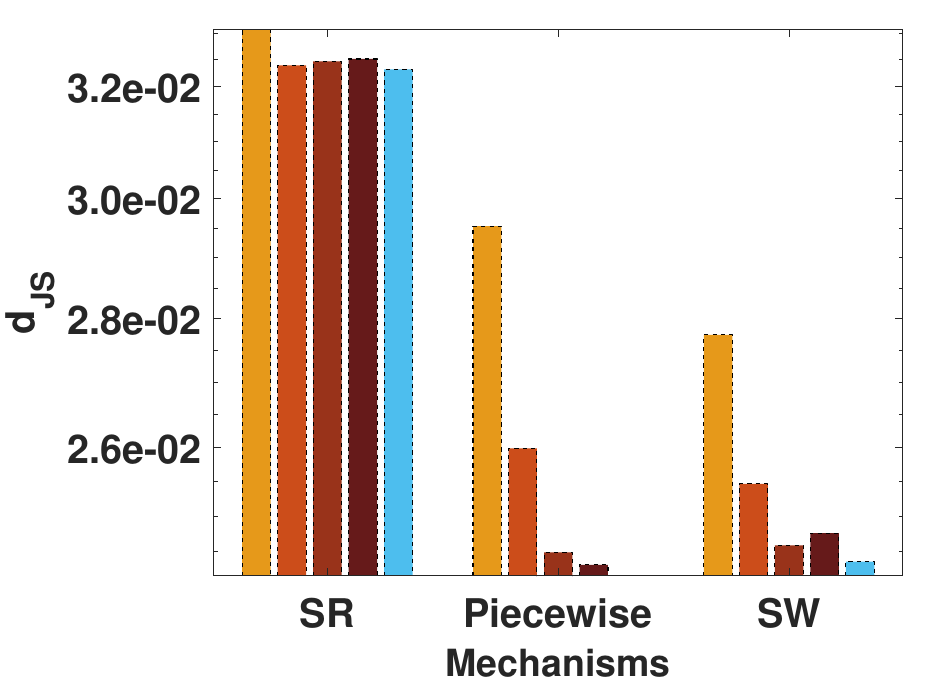}
			\end{minipage}%
		}%
		\subfigure[\textbf{Beta+Sin}, mean.]{
			\begin{minipage}[t]{0.24\linewidth}
				\centering
				\includegraphics[width=1\textwidth]{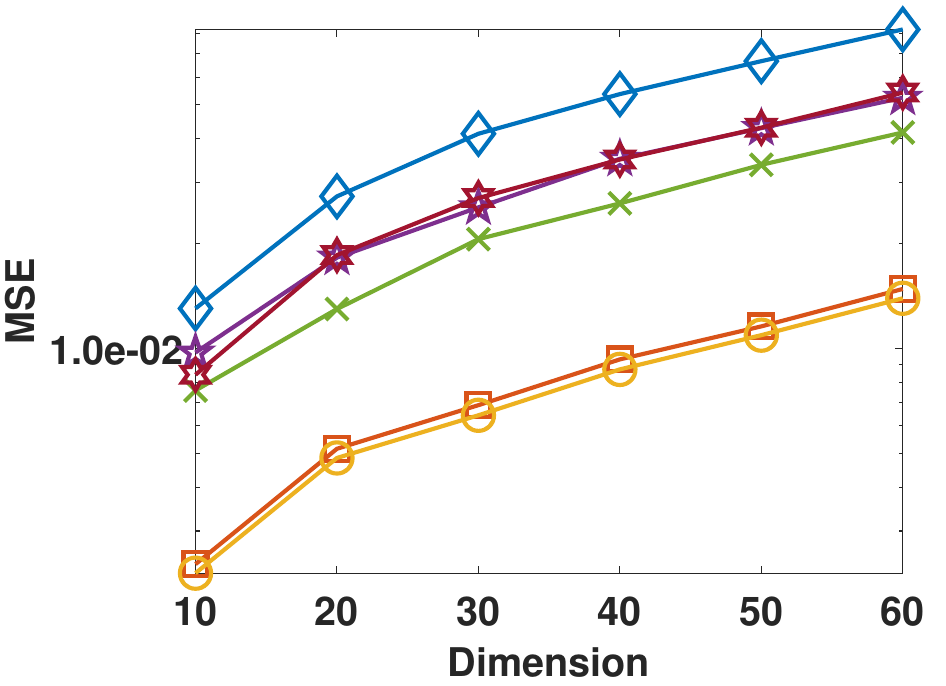}
			\end{minipage}
		}%
		\subfigure[\textbf{Beta+Sin}, distribution.]{
			\begin{minipage}[t]{0.24\linewidth}
				\centering
				\includegraphics[width=1\textwidth]{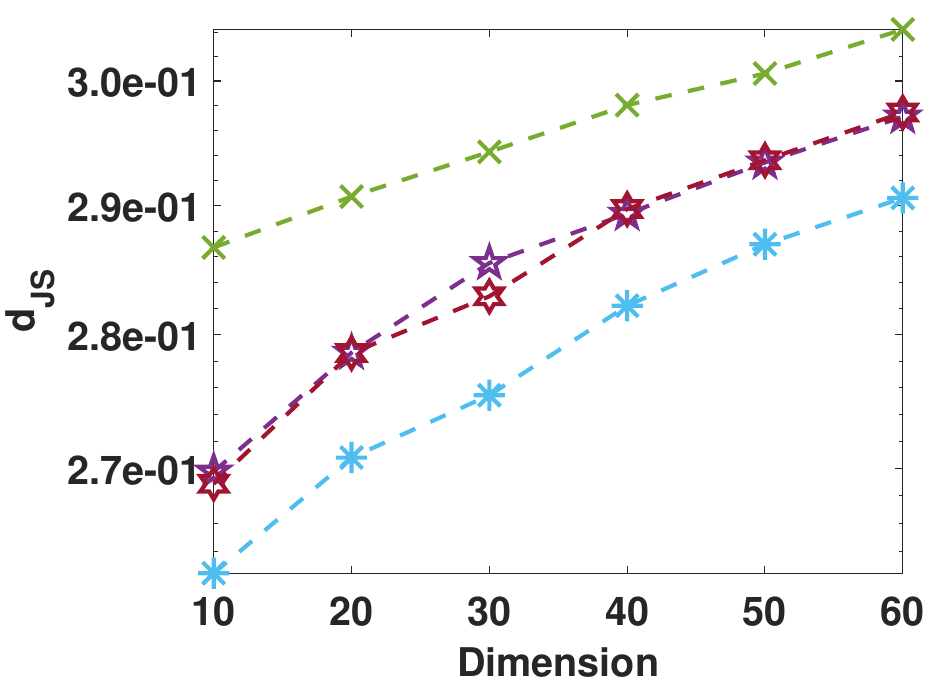}
			\end{minipage}
		}%
		\vspace{-0.05in}
		\caption{Performance evaluation in different scenarios: single mechanism (a)-(b) and high-dimensional cases (c)-(d)}
		\label{Exp3}
		\vspace{-0.15in}
	\end{figure*}

	\noindent\textbf{The Impact of $m$.}
	Figures~\ref{Exp4}(a)-(b) illustrate the impact of service count $m$ on both mean and distribution estimation performance, comparing results between the data collector and individual services. In Figure~\ref{Exp4}(a), which evaluates 4-16 services with privacy budget $\epsilon=0.1$ for each service, both UWA and UA demonstrate significant utility improvements over single-mechanism approaches. This enhanced performance can be attributed to the availability of multiple services, which provide comprehensive information about the posterior  distribution of the original values. Similarly, Figure~\ref{Exp4}(b) presents distribution estimation results across 3-15 services. The results demonstrate that ULE achieves improvements of 0.24\% to 26.04\% over any single service. In Figure~\ref{Exp4}(b), we analyze how the number of services influences ULE's performance improvement. With fewer services ($m$=2,3), ULE shows modest improvements of 0.24\% and 0.62\% respectively, while achieving a 3.62\% improvement over PM at $m$=6. This demonstrates that ULE's performance gains scale with the number of services, as more services provide richer information for distribution estimation.

	\noindent\textbf{The Impact of $n$.}
	Figures \ref{Exp4}(c) (d) illustrate the performance under varying user counts $n$ (from 0.5 to 10 million) with privacy budget $\epsilon=0.1$, where all datasets follow the same distribution as \textbf{Beta+Sin}. As the user count increases, we observe a consistent decrease in both MSE and JS divergence across all methods. This improvement can be attributed to the larger user base size providing more information, thereby reducing variance in the estimates. Our proposed methods - UA, UWA, and ULE - demonstrate superior performance compared to single perturbation mechanisms.
	
	\begin{figure*}[t]
		\hspace{-1in}
		\centering
		{
			\begin{minipage}{6cm}
				\centering
				\includegraphics[scale=0.7]{legend_G1_mean.pdf}
			\end{minipage}
		}
		\hspace{1.2in}
		{
			\begin{minipage}{6cm}
				\centering
				\includegraphics[scale=0.7]{legend_G2_mean.pdf}
			\end{minipage}
		}
		\\
		\hspace{0.4in}
		{
			\begin{minipage}{6cm}
				\centering
				\includegraphics[scale=0.7]{legend_G1_distribution.pdf}
			\end{minipage}
		}
		\hspace{1.1 in}
		{
			\begin{minipage}{6cm}
				\centering
				\includegraphics[scale=0.7]{legend_G2_distribution.pdf}
			\end{minipage}
		}\\
		\centering
		\subfigure[\textbf{Beta25}, mean.]{
			\begin{minipage}[t]{0.24\linewidth}
				\centering
				\includegraphics[width=1\textwidth]{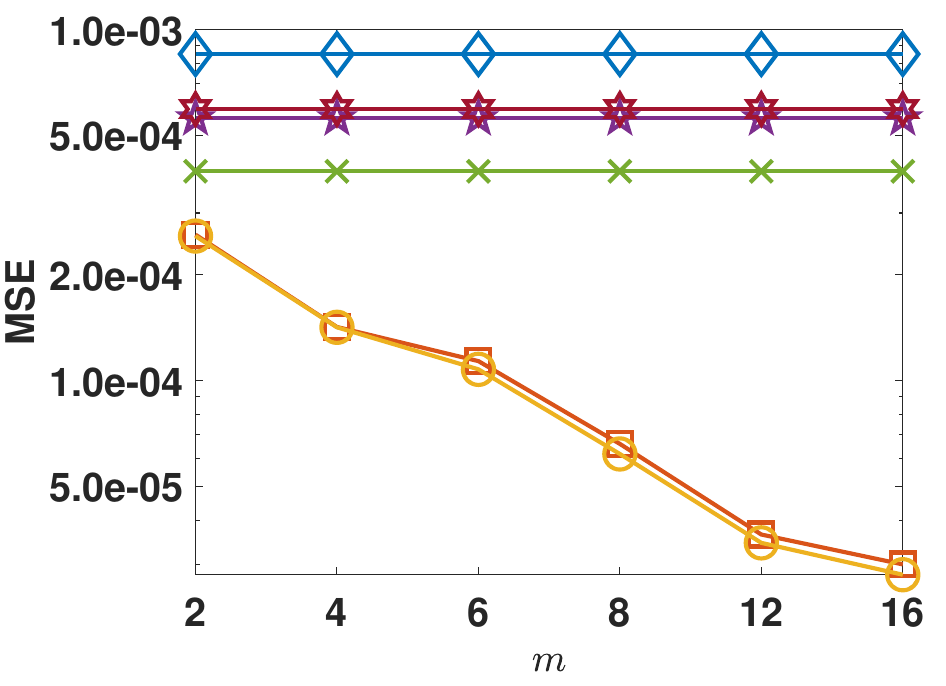}
			\end{minipage}%
		}%
		\subfigure[\textbf{Beta25}, distribution.]{
			\begin{minipage}[t]{0.24\linewidth}
				\centering
				\includegraphics[width=1\textwidth]{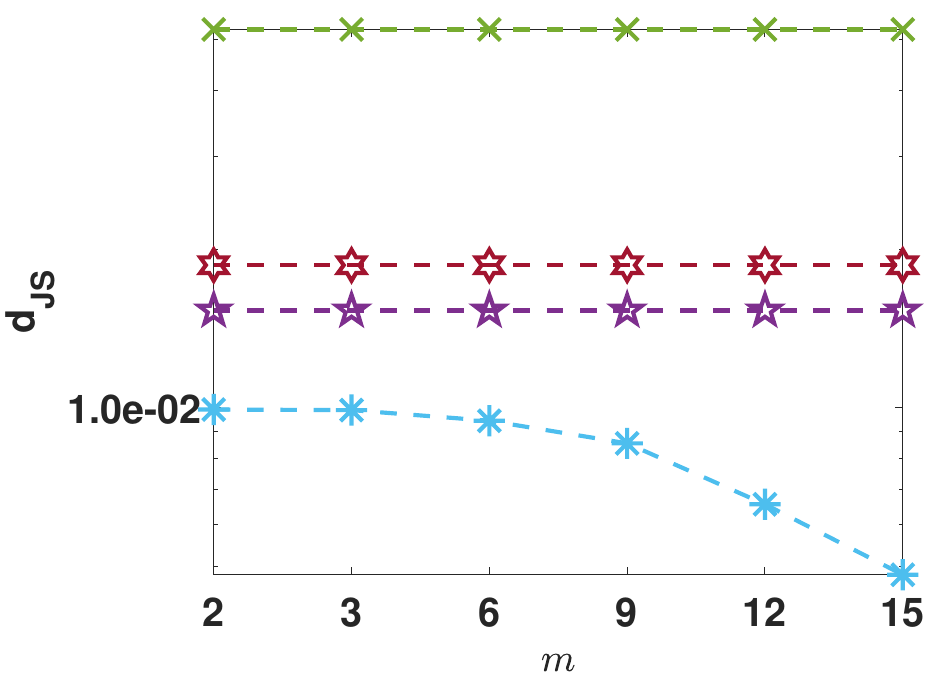}
			\end{minipage}%
		}%
		\subfigure[\textbf{Beta+Sin}, mean.]{
			\begin{minipage}[t]{0.24\linewidth}
				\centering
				\includegraphics[width=1\textwidth]{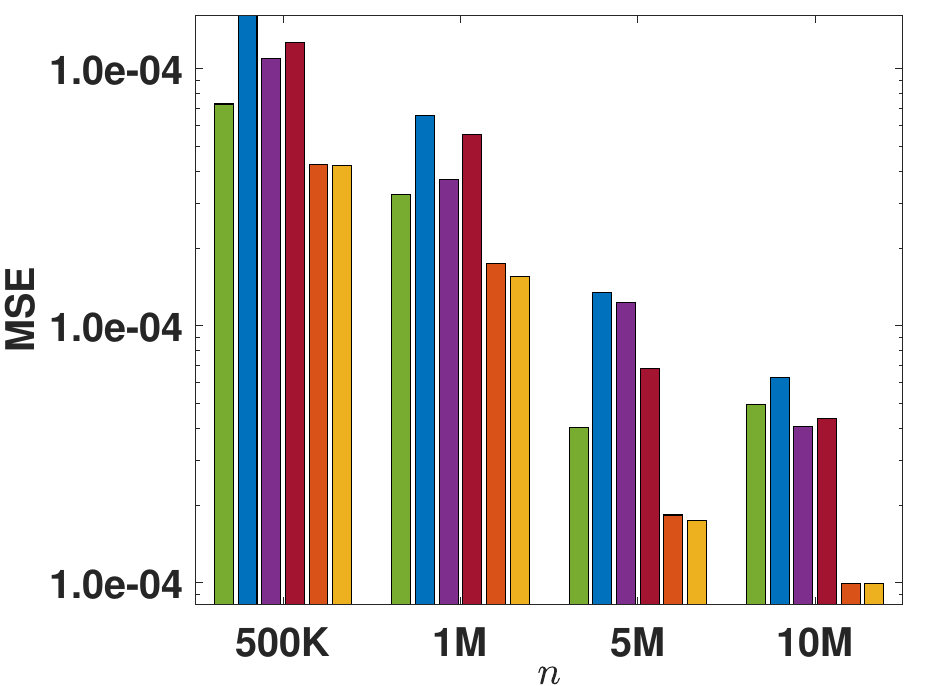}
			\end{minipage}
		}%
		\subfigure[\textbf{Beta+Sin}, distribution.]{
			\begin{minipage}[t]{0.24\linewidth}
				\centering
				\includegraphics[width=1\textwidth]{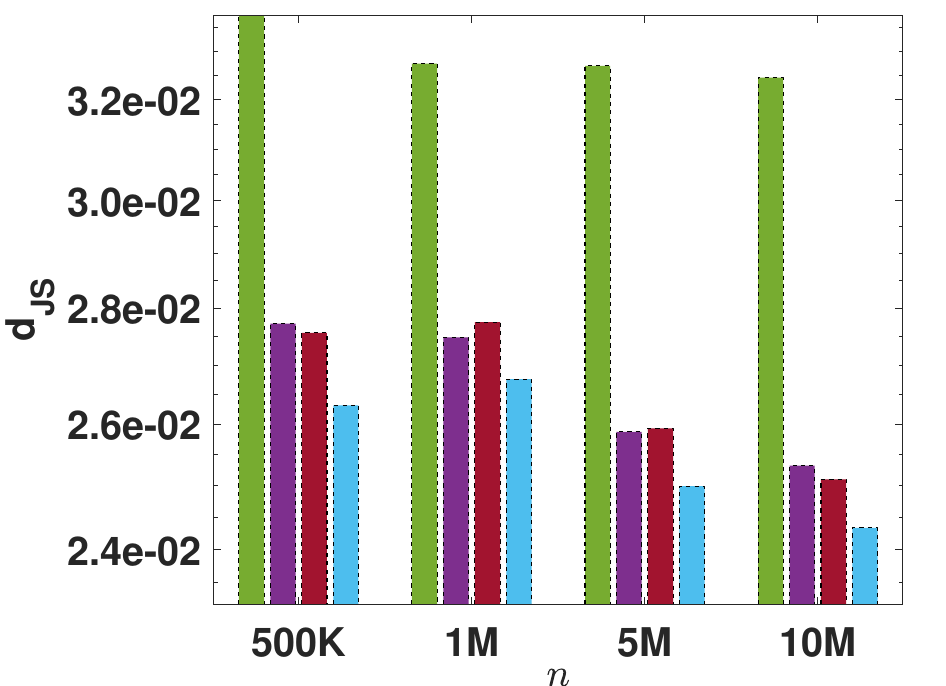}
			\end{minipage}
		}%
		\vspace{-0.05in}
		\caption{Performance evaluation across system scales: server count $m$ (a)-(b) and user base size $n$ (c)-(d)}
		\label{Exp4}
		\vspace{-0.15in}
	\end{figure*}
	
	\subsection{Analysis and Discussion}
	
	\noindent\textbf{Execution Time Analysis.}
	Table~\ref{tab:execution_time} presents the execution time of \textbf{Beta+Sin} with varying numbers of services ($m$) and users ($n$). With $n$ fixed at 1M, UWA exhibits linear growth from 3.45s to 13.16s as $m$ increases from 4 to 16, while ULE maintains relatively stable performance (131.85s to 141.16s) as $m$ increases from 6 to 15. When varying $n$ from 0.5M to 10M (with fixed $m$=4 for UWA and $m$=3 for ULE), both components demonstrate substantial increases in execution time. UWA grows from 2.43s to 349.37s, while ULE shows a more dramatic increase from 69.79s to 12620.94s. When the number of users exceeds 1M, the significant performance degradation of ULE can be attributed to GPU memory constraints, which necessitate matrix partitioning in the EM algorithm, consequently leading to substantially reduced computational efficiency.
	\begin{table}[]
		\caption{Execution Time of \textbf{Beta+Sin} --- UWA for Mean Estimation and ULE for Distribution Estimation (in second)}
		\label{tab:execution_time}
		\begin{tabular}{|c|c|c|c|c|c|c|}
			\hline
			$m$ & \begin{tabular}[c]{@{}c@{}}UWA\\ $n=1M$\end{tabular} & $m$ & \begin{tabular}[c]{@{}c@{}}ULE\\ $n=1M$\end{tabular} & $n$  & \begin{tabular}[c]{@{}c@{}}UWA\\ $m=4$\end{tabular} & \begin{tabular}[c]{@{}c@{}}ULE\\ $m=3$\end{tabular} \\ \hline
			4   & 3.45                                                  & 6   & 131.85                                                        & 0.5M & 2.43                                                 & 69.79                                                        \\ \hline
			8   & 6.73                                                  & 9   & 135.72                                                        & 1M   & 3.45                                                 & 124.05                                                       \\ \hline
			12  & 9.87                                                  & 12  & 138.91                                                        & 5M   & 11.75                                                & 9461.53                                                      \\ \hline
			16  & 13.16                                                 & 15  & 141.16                                                        & 10M  & 349.37                                               & 12620.94                                                     \\ \hline
		\end{tabular}
		\vspace{-0.2in}
	\end{table}
	
	\noindent\textbf{Privacy Budget Analysis.}
	In the LDP setting, privacy budgets exhibit a direct relationship with individual privacy leakage - higher budgets correspond to increasing privacy risks for users. Figures 8(a)-(b) compare the utility and privacy budget consumption of each user under different data collection strategies. In the mean estimation scenario with four services, we evaluate two collection strategies. When the data collector gathers additional information directly from users with a privacy budget of 0.1, the total privacy budget increases to 0.5. When aggregating information from all services without additional queries, the total privacy budget remains at 0.4. The data collector using UWA and UA strategies achieves lower MSE compared to direct collection, regardless of the perturbation mechanism. Similarly, in distribution estimation, the ULE method maintains the lowest JS divergence compared to direct collection, without any additional privacy budget consumption.
	
	\noindent\textbf{A More Dynamic General Case.}
	We further investigate a general scenario where users distribute their information across varying numbers of services, resulting in different user counts across services. In Figures \ref{Exp5}(c)-(d), users are divided into eleven groups (G1-G11): G1-G6 transmit data to two services, G7-G10 to three services, and G11 to all four services. We introduce \textbf{Single best} as the baseline, representing the optimal performance among individual services.
	
	Figure \ref{Exp5}(c) shows that UWA (i.e., the red area) consistently achieves the lowest MSE across all groups, followed by UA (i.e., the orange area), while \textbf{Single best} (i.e., the yellow area) shows higher MSE. Figure \ref{Exp5}(d) compares distribution alignment between \textbf{Single best} (i.e., the green area) and ULE (i.e., the blue area), with ULE showing better performance in most cases. When $m=2$, some groups show minimal improvement due to the limited information provided by each group, since the number of users per group is less than 10,000. These results validate the effectiveness of our methods in more complex and dynamic scenarios with varying user participation patterns.
	
	\noindent\textbf{Effectiveness Analysis.}  For UA, this simple averaging approach performs well when services have equal privacy budgets (Figures 4(a)-(d)), as different mechanisms achieve similar mean estimation performance, making uniform weighting naturally effective. However, when services have different privacy budgets, UA's equal weighting scheme gives excessive influence to high-noise data, leading to performance worse than some individual services (Figures 5(e)-(h) and 6(a)). For UWA, when services have equal privacy budgets, modest improvements are shown over UA as services demonstrate comparable mean estimation performance, resulting in nearly equal weights being assigned by UWA. UWA's slight advantage stems from its Bayesian inference approach. For services with different privacy budgets, UWA significantly outperforms UA by assigning larger weights to low-noise data.
	
	The performance enhancement of ULE over individual services primarily depends on the informativeness of perturbation data from all services. Four key factors influence this informativeness: (1) the number of services - more participating services provide richer information (Figure 7(b)); (2) perturbation mechanism - SW and PM mechanisms preserve more information than SR (Figures 4, 5); (3) privacy budget - larger budgets lead to less perturbation and better information preservation (Figures 5(e)-(h)); and (4) user base size - more users contribute to more comprehensive estimation (Figures 7(d), 8(d)). The more effective information ULE can utilize compared to an individual service, the better the performance improvement becomes.
	
	\begin{figure*}[t]
		\centering
		\subfigure[\textbf{Beta+Sin}.]{
			\begin{minipage}[t]{0.24\linewidth}
				\centering
				\includegraphics[width=1\textwidth]{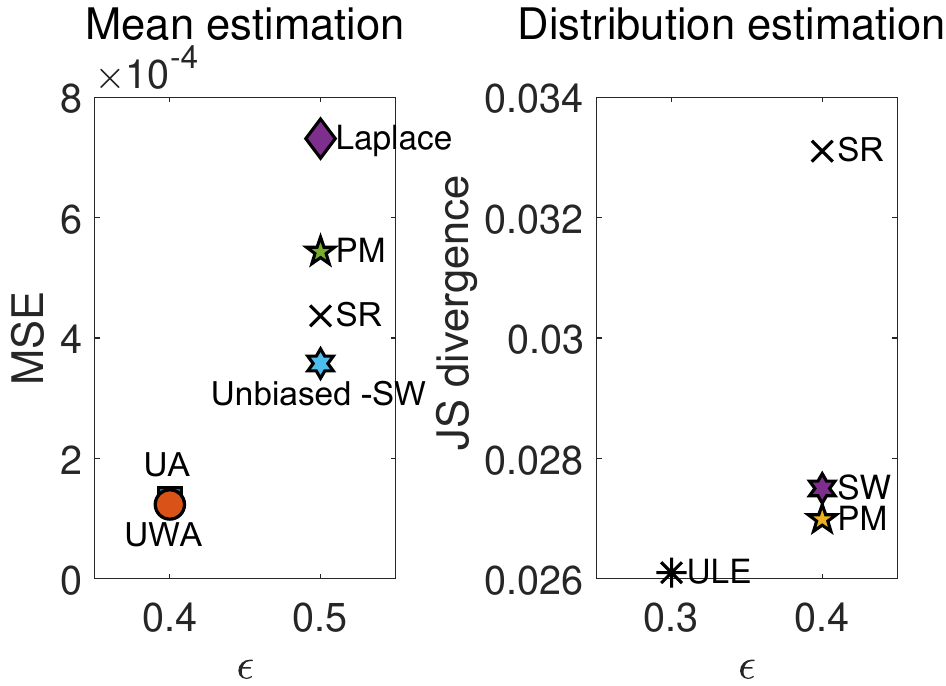}
			\end{minipage}%
		}%
		\subfigure[\textbf{Taxi}.]{
			\begin{minipage}[t]{0.24\linewidth}
				\centering
				\includegraphics[width=1\textwidth]{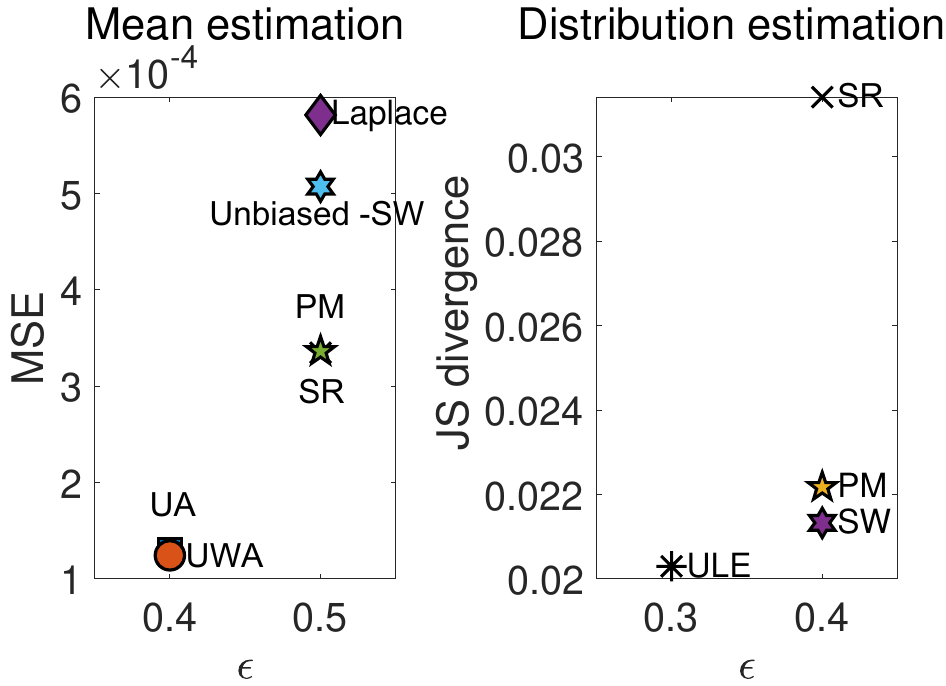}
			\end{minipage}%
		}%
		\subfigure[\textbf{Beta25}, $\epsilon$=0.1, mean.]{
			\begin{minipage}[t]{0.24\linewidth}
				\centering
				\includegraphics[width=1\textwidth]{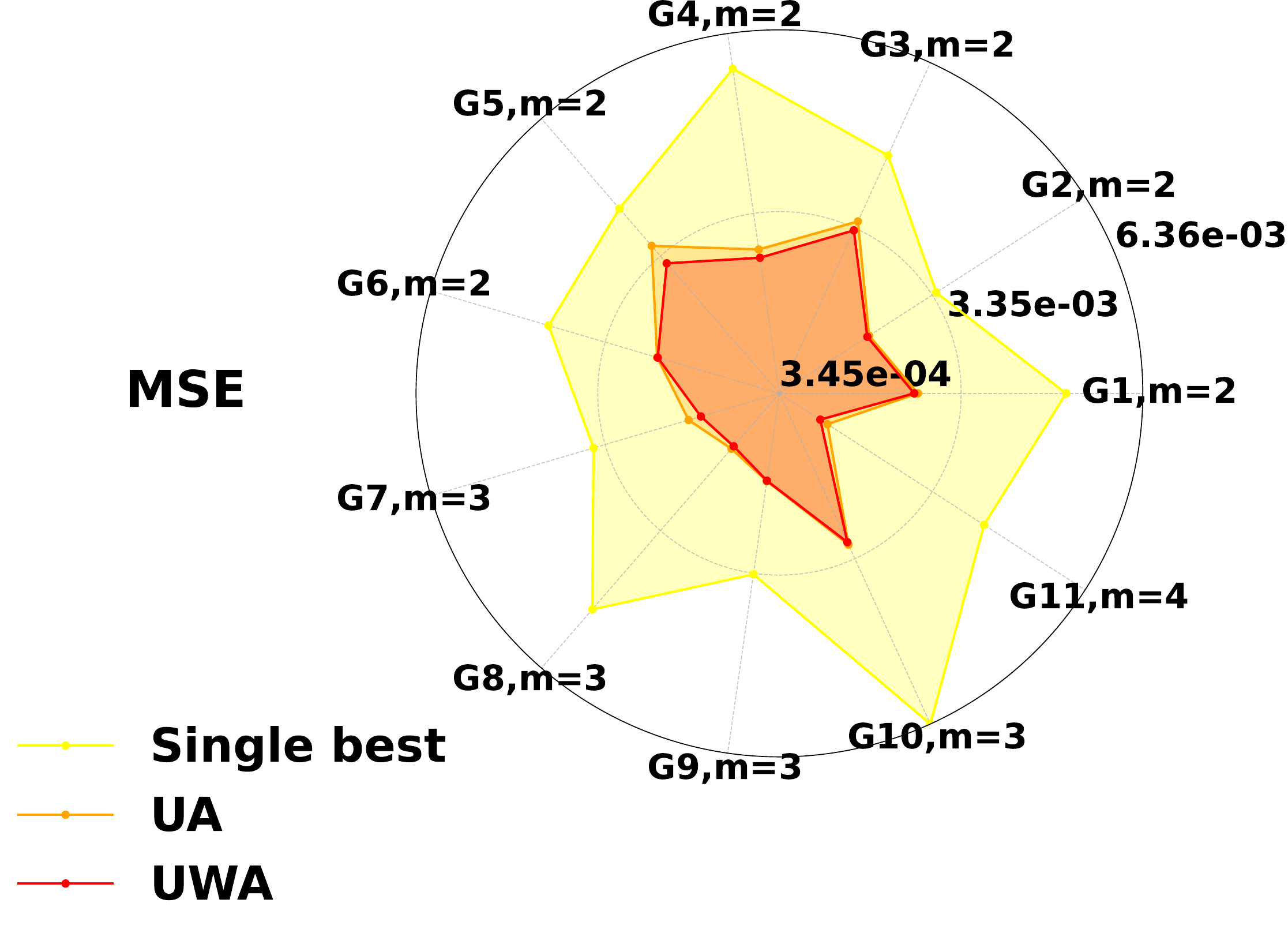}
			\end{minipage}
		}%
		\subfigure[\textbf{Beta25}, $\epsilon$=0.1, distribution.]{
			\begin{minipage}[t]{0.24\linewidth}
				\centering
				\includegraphics[width=1\textwidth]{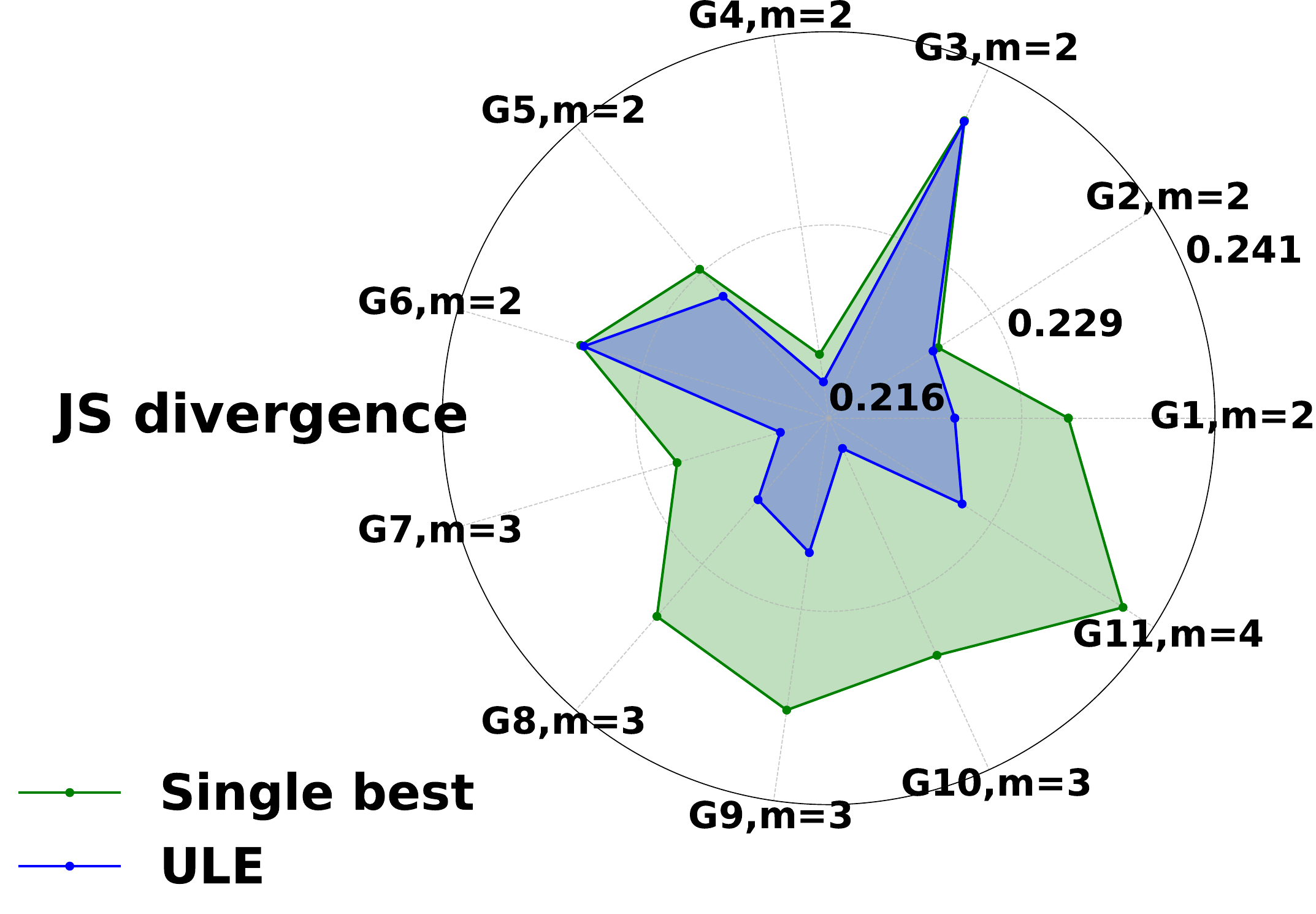}
			\end{minipage}
		}%
		\vspace{-0.05in}
		\caption{ Comparison of utility and privacy guarantees under different data collector strategies (a)-(b), and multi-group performance analysis in generalized implementation scenarios (c)-(d)}
		\label{Exp5}
		\vspace{-0.05in}
	\end{figure*}
	\color{black}

\section{Related Work}
\label{relatedwork}

Local Differential Privacy (LDP)\cite{chen2016private, duchi2013local, kasiviswanathan2011can} extends the concept of Differential Privacy (DP)\cite{dwork2008differential, dwork2006calibrating, mcsherry2007mechanism} to distributed systems, providing robust privacy guarantees for individual users' data. The concept of local privacy originated in 1965 with Warner's~\cite{warner1965randomized} introduction of the randomized response (RR) technique, a simple yet effective perturbation method. Since then, LDP has been extensively studied and widely adopted in various applications. Notable implementations include Google Chrome~\cite{erlingsson2014rappor}, Apple iOS~\cite{team2017learning}, and Microsoft's Windows 10~\cite{ding2017collecting}. Samsung has also conducted research in this area~\cite{nguyen2016collecting}. LDP's applications span diverse domains, including graph data collection~\cite{sun2019analyzing, ye2020towards}, key-value data analysis~\cite{ye2019privkv, gu2020pckv}, and machine learning~\cite{truex2020ldp,wang2023ldp,zhao2020local}.


Recent work by \cite{xu2020collecting} introduces a novel approach to data fusion for multi-source, multi-table data, ensuring differential privacy and protection against frequency attacks. Their research focuses on maintaining strong privacy guarantees during joint data collection, developing methods for joint analysis of perturbed values from various services, and demonstrating OLAP queries on fused data while preserving privacy. However, our work differs significantly in several aspects. While \cite{xu2020collecting} deals with multi-attribute data, we focus specifically on numerical data. Our research does not involve the data collection process itself, instead concentrating solely on the analysis of already perturbed values. Importantly, in our scenario, the same data point may be repeatedly collected by different data collectors, a situation not addressed in \cite{xu2020collecting}. Furthermore, we consider cases where different data collectors may employ distinct LDP perturbation mechanisms, adding another layer of complexity to the aggregation process. These differences highlight our unique contribution in addressing the challenges of aggregating repeatedly collected, differently perturbed numerical data across multiple collectors, thereby extending the field of privacy-preserving data analysis in new directions.

Another category of related work is previous work \cite{yiwen2018utility, du2023differential}. These two papers describe scenarios where different groups of users are allocated inconsistent privacy budgets. Users employ the same perturbation mechanism to distort their data, while the data collector aggregates the perturbed results from different groups. They designed a scheme called Advanced Combination, which combines estimation results from multiple privacy levels. By using least squares method, they find the optimal weights for combination, thereby minimizing utility loss. In paper \cite{yiwen2018utility}, the method only uses the simplest RR perturbation mechanism for binary data, where the variance is unrelated to the original data. In paper \cite{du2023differential}, due to the unknown true user data, they consider the worst-case minimum variance, assigning weights to the mean results of each group. In our work, we do not group users. Additionally, we utilize perturbation knowledge to characterize the possible distribution of true data, assigning weights to the results of different perturbation mechanisms for each user.

\section{Conclusions}
\label{conclusion} 
This paper addresses the growing privacy concerns in the era of multi-service data collection where many services increasingly desire users' information. To mitigate this issue, we propose a novel framework that enables aggregating perturbed data from multiple services while preserving strong user privacy guarantees, including two novel methods for mean estimation, i.e., UA and UWA. Additionally, we develop the ULE method for accurate distribution estimation, which treats the perturbed results from each user as a whole to perform maximum likelihood estimation. The experimental results demonstrate the effectiveness of our framework and the constituent methods.
\section*{Acknowledgements}
\label{sec::ack}
This work was supported by the National Natural Science Foundation of China (Grant No: 92270123 and 62372122), and the Research Grants Council, Hong Kong SAR, China (Grant No:  15208923, 15210023, 15224124, and 25207224).


\bibliographystyle{ACM-Reference-Format}
\bibliography{references}
\end{document}